\newcommand{\pcite}[1]{\citeauthor{#1}'s \citeyearpar{#1}}
\newtheorem{theorem}{Theorem}
\theoremstyle{definition}
\newtheorem{remark}{Remark}
\newtheorem{algorithm}{Algorithm}
\newcommand{\sX}{\mathsf{X}}
\newfont{\msbm}{msbm10 at 11pt}
\newcommand {\R} {\mathbb{R}}
\newcommand {\N} {\mbox{\msbm N}}
\newcommand{\Var}[0]{\text{Var}}
\newcommand{\beq}{\begin{eqnarray*}}
\newcommand{\eeq}{\end{eqnarray*}}
\newcommand{\beqn}{\begin{eqnarray}}
\newcommand{\eeqn}{\end{eqnarray}}
\begin{document}

\begin{frontmatter}

\title{Exact sampling for intractable probability distributions via a Bernoulli factory}
\runtitle{Exact sampling via a Bernoulli factory}

\begin{aug}
\author{\fnms{James M.} \snm{Flegal}\corref{}\ead[label=e1]{jflegal@ucr.edu}}

\address{Department of Statistics \\ University of California, Riverside\\
           \printead{e1}}
\end{aug}
\vspace*{6pt}
\textbf{\and}
\vspace*{-6pt}
\begin{aug}
\author{\fnms{Radu} \snm{Herbei}\ead[label=e3]{herbei@stat.osu.edu}}

\address{Department of Statistics \\ The Ohio State University\\
          \printead{e3}}

  \runauthor{J.M. Flegal and R. Herbei}

\end{aug}

\begin{abstract}
Many applications in the field of statistics require Markov chain Monte
Carlo methods. Determining appropriate starting values and run lengths
can be both analytically and empirically challenging. A desire to
overcome these problems has led to the development of exact, or
perfect, sampling algorithms which convert a Markov chain into an
algorithm that produces i.i.d.\ samples from the stationary
distribution. Unfortunately, very few of these algorithms have been
developed for the distributions that arise in statistical applications,
which typically have uncountable support.  Here we study an exact
sampling algorithm using a geometrically ergodic Markov chain on a
general state space.  Our work provides a significant reduction to the
number of input draws necessary for the Bernoulli factory, which
enables exact sampling via a rejection sampling approach.  We
illustrate the algorithm on a univariate Metropolis-Hastings sampler
and a bivariate Gibbs sampler, which provide a proof of concept and
insight into hyper-parameter selection.  Finally, we illustrate the
algorithm on a Bayesian version of the one-way random effects model
with data from a styrene exposure study.
\end{abstract}

\begin{keyword}[class=AMS]
\kwd[Primary ]{60J22}
\end{keyword}

\begin{keyword}
\kwd{Markov chain}
\kwd{Monte Carlo}
\kwd{perfect sampling}
\kwd{Bernoulli factory}
\kwd{geometric ergodicity}
\end{keyword}

\received{\smonth{12} \syear{2010}}

\end{frontmatter}

\section{Introduction} \label{sec:intro}

Suppose we want to explore a probability distribution $\pi$ defined on
$\sX$.  Further suppose $\pi$ is intractable in the sense that direct
(i.i.d.) sampling is unavailable.  In this setting, the Markov chain
Monte Carlo (MCMC) method can be a useful tool since it is often
straightforward to construct and simulate an ergodic Markov chain that
has $\pi$ as its stationary distribution
\citep{chen:shao:ibra:2000,robe:case:1999,liu:2001}.  The two main
drawbacks of MCMC relative to direct sampling from $\pi$ are (i)
the\vadjust{\eject}
difficulty in ascertaining how long the Markov chain needs to be run
before it gets ``close'' to $\pi$ \citep[see, e.g.,][]{jone:hobe:2001},
and (ii) the difficulty in deriving and calculating asymptotically
valid standard errors for the ergodic averages that are used to
approximate intractable expectations under $\pi$ \citep[see,
e.g.,][]{fleg:hara:jone:2008}.


A desire to overcome these problems has led to the development of
clever techniques using a Markov chain to create an algorithm that
produces i.i.d.\ draws from the stationary distribution
(e.g., \citealp{crai:meng:2011}; Green and Murdoch, \citeyear{gree:murd:1999}; \citealp{hube:2004, prop:wils:1996, wils:2000}).
Unfortunately, very few of these so-called perfect
sampling algorithms have been developed for the distributions that
arise in realistic statistical applications, which typically have
uncountable support.  \cite{asmu:glyn:thor:1992} and
Blanchet and Meng (\citeyear{blan:meng:2005}) provide one such algorithm applicable to Markov
chains on general state spaces.  The main assumption necessary is that
the chain satisfies a one-step minorization condition.  As we describe
later, under this condition the stationary distribution admits a {\em
mixture representation}, suggesting the following two-step sampling
approach: sample the discrete distribution corresponding to the mixture
weights, then sample the selected mixture component.

This approach has never been successfully implemented, however, it has been used to obtain approximate draws from $\pi$, see for example \citet{blan:thom:2007}, \citet{hobe:jone:robe:2006} and \citet{hobe:robe:2004}.  The difficult part is drawing from the discrete distribution corresponding to the mixture weights, which is done via a rejection sampling approach.  In this paper, we provide solutions to a number of practical problems and illustrate the algorithm on three examples.  This requires overcoming two challenges: (i) obtaining a dominating proposal distribution and (ii) generating a ${\rm Bernoulli}$ variate to decide whether a proposed draw is accepted or not.  While (ii) might seem trivial, the challenge is that we are unable to (exactly) compute the success probability for this ${\rm Bernoulli}$
variate.\looseness=-1

A solution to (i) requires identification of a bounding (proposal) probability mass function for the target mass function.  Previously, \citet{blan:meng:2005} proposed an upper bound on moments associated with the target mass function.  \citet{blan:thom:2007} used output from a preliminary run of the Markov chain to construct an approximate upper bound.  We provide an explicit bound for Markov chains satisfying a geometric drift condition using results from \citet{robe:twee:1999}.

A solution to (ii) will determine whether to accept a proposed draw.  The decision is made by generating a ${\rm Bernoulli}$ random variable with success probability that involves the ratio between the target and proposal mass functions.  In our case, the target mass function is unknown, apparently making this step impossible.  However, one can still generate such a ${\rm Bernoulli}$ variate, using a so-called {\em Bernoulli factory} \citep{kean:obri:1994}.  Briefly, a Bernoulli factory is an algorithm that outputs a ${\rm Bernoulli}$ variate with success probability $f(p)$, from  i.i.d.\ ${\rm Bernoulli}(p)$ variates, when $f$ is known but $p$ is unknown.

The rejection sampling approach requires a Bernoulli factory algorithm for $f(p)=ap$ and $a \in (1 , \infty)$.  \cite{nacu:pere:2005} and \cite{latu:kosm:papa:robe:2011} provide an algorithm when $a = 2$, but their algorithms are computationally demanding and scale poorly for $a \in (1 , \infty)$.  For example when $p \in (0 , .4)$, one requires at least $65,536$ ${\rm Bernoulli}(p)$ random variables to generate one ${\rm Bernoulli}(2p)$ variate.  In this paper, we provide an algorithm for any $a \in (1 , \infty)$ that reduces the computational time substantially.  For example when $p \in (0, .4)$, we can obtain a ${\rm Bernoulli}(2p)$ variate with only $256$ ${\rm Bernoulli}(p)$ random variables.  This is an important reduction because the Bernoulli factory accounts for much of the computational time in the exact sampling algorithm.  Section~\ref{sec:Bernoulli} contains a full description the Bernoulli factory and our modification.

Our solutions to (i) and (ii) yield an exact sampling algorithm for $\pi$.  The algorithm is suitable even for intractable distributions on general state spaces: that is, for distributions that typically arise in statistical applications. This is an important extension, since very few existing algorithms apply to general state spaces, but it is limited in the sense that one must be able to establish a drift and associated minorization condition for the underlying Markov chain.  The current algorithm can be computationally demanding, however we have successfully implemented it in three examples.

Our first example considers a univariate Metropolis-Hastings sampler for which we obtain 1000 i.i.d.\ draws.  The second example considers a slightly more complicated bivariate Gibbs sampler where we again obtain 1000 i.i.d.\ draws.  These two examples could be considered toy examples in the sense that i.i.d.\ observations are available for each.  However, they provide insights into the performance and hyper-parameter selection of the algorithm.

Our final example considers a Bayesian version of the classical one-way random effects model that is widely used to analyze data.  We illustrate the exact sampling algorithm, using data from a styrene exposure study, to obtain 20 i.i.d.\ draws.  This is the first successful implementation of an exact sampling algorithm for a model of this type.  Our analysis considers a balanced design and requires development of a suitable drift condition, which improves upon the existing drift constants of \cite{tan:hobe:2009}.

These examples give hope for exact sampling algorithms for general state space Markov chains in more complicated settings.  Even if we are unable to obtain multiple draws in these settings, a single exact draw will alleviate the need for burn-in entirely.

The rest of this paper is organized as follows.  Section~\ref{sec:rejection} provides the mixture representation of $\pi$, details the rejection sampling approach and bounds the tail probabilities of the proposal distribution.  Section~\ref{sec:Bernoulli} introduces the Bernoulli factory and proposes a new target function that speeds up the algorithm significantly.  Section~\ref{sec:exact} gives the full exact sampling algorithm.  Sections~\ref{sec:toy} and~\ref{sec:real} implement the algorithm for two toy examples and a Bayesian version of a one-way random effects model, respectively.  Finally, Section~\ref{sec:dis} discusses our implementation and provides some general recommendations to practitioners.

\vspace*{-3pt}
\section{Exact sampling via a mixture distribution} \label{sec:rejection}
\vspace*{-3pt}

Suppose we want to explore the intractable probability measure $\pi(dx)$ defined on the measurable space $\left( \sX,{\cal B}(\sX) \right)$.  Let $P: \sX \times {\cal B}(\sX) \rightarrow [0,1]$ be a Markov transition function and let $X = \{X_n\}_{n=0}^\infty$ denote the corresponding Markov chain.  Then for $x \in \sX$ and a measurable set $A$,
\[
P(x,A) = \Pr \left( X_{n+1} \in A|X_n = x \right) \;.
\]
Assume that $\pi$ is an invariant measure for the chain; i.e., $\pi(A) = \int_\sX P(x,A) \, \pi(dx)$ for all measurable $A$.  Assume further that $X$ satisfies the usual regularity conditions, which are irreducibility, aperiodicity and positive Harris recurrence.  For definitions, see \citet{meyn:twee:1993} and \citet{robe:rose:2004}.  Finally, assume we are able to simulate the chain; that is, given $X_n=x$, we have the ability to draw from $P(x,\cdot)$.

The exact sampling algorithm considered here utilizes a mixture representation for $\pi$ \citep{hobe:jone:robe:2006, asmu:glyn:thor:1992, hobe:robe:2004}, however, we must first develop the split chain.  The main assumption necessary is that $X$ satisfies a one-step minorization condition, i.e.\ there exists a function $s : \sX \rightarrow [0,1]$ satisfying $\int_{\sX} s(x) \, \pi(dx)>0$ and some probability measure $Q(dy)$ on $(\sX, {\cal B}(\sX))$ such that,
\begin{equation} \label{eq:mc}
 P(x,A) \ge s(x) \, Q(A) \;\; \mbox{for all} \; x \in \sX \; \mbox{and} \; A \in {\cal B}(\sX) \;.
\end{equation}

Given $X$ satisfies the one-step minorization condition at \eqref{eq:mc}, then $P$ can be decomposed as
\begin{equation}
 \label{eq:split}
 P(x,dy) = s(x) \, Q(dy) + \left( 1-s(x) \right) \, R(x,dy) \;,
\end{equation}
where
\[
R(x,dy) = \frac{P(x,dy) - s(x) \, Q(dy)}{1-s(x)} \; ,
\]
and define $R(x, dy) = 0$ if $s(x) = 1$.  It is helpful to think of \eqref{eq:split} as a mixture of two Markov transition functions with probabilities $s(x)$ and $1-s(x)$.  Equation \eqref{eq:split} shows that it is possible to simulate $X_{n+1}$ given $X_n=x$ as follows: Flip a coin (independently) that comes up heads with probability $s(x)$.  If the coin is a head, take $X_{n+1} \sim Q(\cdot)$; if it's a tail, take $X_{n+1} \sim R(x,\cdot)$.

This decomposition has several important applications in MCMC.  Indeed,
it can be used to perform regenerative simulation
(\citealp{hobe:jone:pres:rose:2002}; Mykland et~al., \citeyear{mykl:tier:yu:1995}) and to derive
computable bounds on the convergence rate of $X$
\citep{rose:2002,rose:1995a,lund:twee:1996,robe:twee:1999}.

Now consider a new Markov chain that actually includes the coin flips mentioned above.  Let $X' = \{(X_n,\delta_n)\}_{n=0}^\infty$ be a Markov chain with state space $\sX \times \{0,1\}$.  If the current state is $(X_n,\delta_n) = (x,\delta)$, then the next state, $(X_{n+1},\delta_{n+1})$, is drawn as follows.  If $\delta=1$, then $X_{n+1} \sim Q(\cdot)$; while if $\delta=0$, $X_{n+1} \sim R(x,\cdot)$.  Then, conditional on $X_{n+1}=x'$, $\delta_{n+1} \sim \mbox{Bernoulli}(s(x'))$.  This chain is called the split chain.  Equation \eqref{eq:split} implies that, marginally, the sequence of $X_n$ values in the split chain has the same overall probability law as the original Markov chain $X$ \citep[Chapter 4]{numm:1984}.  Note that, if $\delta_n=1$, then the distribution of $(X_{n+1},\delta_{n+1})$ does not depend on $x$.  

\begin{remark}
We can avoid drawing from $Q(\cdot)$ entirely by changing the order slightly.  Given $x$ is the current state, we can simply generate $X_{i+1} \sim P(x, \cdot)$ in the usual manner and then generate $\delta_i | X_{i}, X_{i+1}$ with
\begin{equation} \label{eq:delta equals 1}
Pr \left( \delta_i = 1 | X_{i}, X_{i+1} \right) = \frac{s(X_{i}) q(X_{i+1})}{k(X_{i+1} | X_{i})} \; ,
\end{equation}
where $q(\cdot)$ and $k(\cdot | x)$ are the densities corresponding to $Q(\cdot)$ and $P$ \citep[][p.\ 62]{numm:1984}.  
\end{remark}

\subsection{A mixture representation of $\pi$} \label{sec:mixture}

The reason for introducing the split chain is that it possesses an accessible atom, $\sX \times \{1\}$.  Indeed, each time the set $\sX \times \{1\}$ is entered, the split chain stochastically restarts itself (because the next $X_n$ has distribution $Q$).  Let $\mathbb{N} = \{1,2,3,\dots\}$ and $ (X_0,\delta_0) \in \sX \times \{1\}$, then define the first return time to the atom as
\[
\tau = \min \big \{ n \in \mathbb{N} : (X_n,\delta_n) \in \sX \times \{1\} \big \} \; .
\]
Our assumptions about $P$ imply that $\mbox{E}(\tau) < \infty$ and hence the sequence $\{p_n\}_{n=1}^\infty$ defined by
\begin{equation*}
p_n = \frac{\Pr(\tau \ge n)}{\mbox{E}(\tau)}
\end{equation*}
is nonnegative, nonincreasing, and sums to one.  Let $T$ denote a discrete random variable on $\mathbb{N}$ with $\Pr(T=n) = p_n$; also let $Q_n$ be the conditional distribution of $X_n$ given that the split chain does not return to the atom before time $n$.  Thus for any $n \in \mathbb{N}$ and any measurable $A$, $Q_n(A) = \mbox{Pr}(X_n \in A \mid \tau \ge n )$.  (Note that $Q_1 \equiv Q$.)  Then $\pi$ can be written as the following mixture of the $Q_n$ values:
\begin{equation} \label{eq:mix_rep}
\pi(dx) = \sum_{n=1}^\infty p_n \, Q_n(dx) \;.
\end{equation}

\begin{remark}
The representation at \eqref{eq:mix_rep} can be obtained from results
in Asmussen et~al. (\citeyear{asmu:glyn:thor:1992}) by applying their methods to the split
chain.  Alternatively, \citet{hobe:robe:2004} obtain the representation
when $s(x)$ has the specific form $\varepsilon I_{C} (x)$ and
\citet{hobe:jone:robe:2006} obtain the representation with the more
general minorization shown here.
\end{remark}

The representation at \eqref{eq:mix_rep} offers an alternative sampling
scheme for $\pi$.  First, make a random draw from the set $\left\{ Q_1,
Q_2, Q_3, \dots \right\}$ according to the probabilities $p_1, p_2,
p_3, \dots $ and then make an independent random draw from the
chosen~$Q_n$. 

\noindent\hrulefill

\noindent {\rm Sampling algorithm for $\pi$:}
\vspace*{-3pt}
\begin{enumerate}
\item Draw $T$ such that $Pr(T = n) = p_n$ for $n = 1, 2, 3, \dots$, call the result $t$.
\item Make a draw from $Q_t(\cdot)$.
\end{enumerate}

\vspace*{-4mm}
\noindent\hrulefill
\eject

Drawing from $Q_n$ is simple even when $n \ge 2$.  Indeed, just repeatedly simulate $n$ iterations of the split chain, and accept $X_n$ the first time that $\delta_1=\cdots=\delta_{n-1}=0$.  The challenging part of this recipe is drawing from the set $\left\{ Q_1, Q_2, Q_3, \dots \right\}$, i.e.\ simulating a random variable $T$, since the $p_n$ values are not computable.  \citet{hobe:jone:robe:2006} and \citet{hobe:robe:2004} approximate the $p_n$ values, which yields approximate draws from $T$ and thus $\pi$.  In this paper, we obtain exact draws from $T$ that result in exact draws from $\pi$.

\begin{remark}
Let $K | T=n$ be the number of simulations of the split chain before we get a draw from $Q_n$, conditional on $T=n$ from Step 1 above.  Then $K | T=n$ is geometric with mean $\mbox{E}( K | T=n ) = 1 / P(\tau \ge n ) $.  Unfortunately, \citet{blan:meng:2005} show $\mbox{E}( K) = \infty$, and justifiably argue \eqref{eq:mix_rep} should not be used for multiple replications.  This presents a major challenge in the applicability of our algorithm and others that can be similarly expressed, some of which are discussed in the next section.
\vspace*{-3pt}
\end{remark}

\subsection{Rejection sampler for $T$}
\vspace*{-3pt}

There is one case where simulating $T$ is simple
(\citealp{hobe:jone:robe:2006}; Hobert and Robert, \citeyear{hobe:robe:2004}).  Suppose that in the minorization condition
\eqref{eq:mc}, $s(x) \equiv \varepsilon > 0$ for all $x \in \sX$
(implying the Markov chain is uniformly ergodic) and consider the
procedure for simulating the split chain with this constant $s$.  In
particular, note that the coin flip determining whether $\delta_n$ is 0
or 1 does not depend on $x$, and it follows that the number of steps
until the first return to the accessible atom has a geometric
distribution.  Indeed, $\Pr(\tau \ge n) = (1 - \varepsilon)^{n-1}$.
Hence, $\mbox{E}(\tau) = 1/\varepsilon$ and
\[
p_n = \Pr(T = n) = \frac{\Pr(\tau \ge n)}{\mbox{E}(\tau)} = \varepsilon (1 - \varepsilon)^{n-1} \;,
\]
so $T$ also has a geometric distribution.  Therefore it is easy to make exact draws from $\pi$.

\citet{hobe:robe:2004} show this exact sampling algorithm is equivalent to \pcite{murd:gree:1998} Multigamma Coupler and to \pcite{wils:2000} Read-Once algorithm.  It is interesting that \eqref{eq:mix_rep} can be used to reconstruct perfect sampling algorithms based on coupling from the past despite the fact that its derivation involves no backward simulation arguments.  Of course, this exact sampling algorithm will be useless from a practical standpoint if $\varepsilon$ is too small.

Unfortunately, in statistical inference problems, the MCMC algorithms are usually driven by Markov chains that are not uniformly ergodic and, hence, cannot satisfy \eqref{eq:mc} with a constant $s$.  Moreover, there is no efficient method to simulate $T$ where $s$ is non-constant.  (When $s$ is non-constant, the distribution of $\tau$ is complex and its mass function is not available in closed form. Hence, the mass function of $T$ is also unknown, which precludes direct simulation of $T$.)  Therefore we must resort to indirect methods of simulating $T$.

Fortunately, simulating $\tau$ is trivial---indeed, one can simply run the split chain and count the number of steps until it returns to $\sX \times \{1\}$.  Because this provides an unlimited supply of i.i.d.\ copies of $\tau$, we can use a rejection sampling approach \citep{asmu:glyn:thor:1992, blan:meng:2005} to simulate $T$ from the i.i.d.\ sequence $\tau_1,\tau_2,\dots$ (where $\tau_1 \stackrel{d}{=} \tau$).

Suppose there exists a function $d: \mathbb{N} \rightarrow [0,1]$ such that $\sum_{n=1}^\infty d(n) = D < \infty$ and $P(\tau \ge n) \le M d(n)$ where $M$ is a finite, positive constant. Consider a rejection sampler with candidate mass function $d(\cdot)/D$.  Thus
\[
\frac{\Pr(T = n)}{d(n)/D} = \frac{\Pr(\tau \ge n) / \mbox{E}(\tau)}{d(n)/D} = \frac{D}{\mbox{E}(\tau)} \frac{\Pr(\tau \ge n)}{d(n)} \le \frac{D}{\mbox{E}(\tau)} M \; ,
\]
which justifies the following rejection sampler.

\noindent\hrulefill

\noindent {\rm Rejection sampler for simulating $T$:}
\begin{enumerate}
\item Draw $T \sim \frac{d(\cdot)}{D}$.  Call the result $n$ and let $a = 1 / \left[ M d(n) \right]$.
\item Draw an independent Bernoulli random variable, $B$, with success probability $ a \Pr(\tau \ge n)$.  If $B=1$, accept $T=n$; if $B=0$, return to Step~1.
\end{enumerate}

\vspace*{-4mm}
\noindent\hrulefill

Unfortunately, the standard method of simulating $B$ (by computing $ a \Pr(\tau \ge n)$ and comparing it to an independent $\mbox{Uniform}(0,1)$ random variable) is not available to us because the mass function of $\tau$ is unavailable in closed form.  However, we may draw $B$ without knowing the value of $\Pr(\tau \ge n)$ using a supply of i.i.d.\ copies of $\tau$. This is the basis of our exact sampling approach.

Suppose $a \in (0 , 1 ]$ and let $p = \Pr(\tau \ge n)$, then there exists a simple solution to generate $B\sim {\rm Bernoulli}(a p)$, which \citet{fill:1998} calls ``engineering a coin flip''.  Indeed simulate a single $\tau$ and define
\begin{equation*}
W = \begin{cases} 1 \quad & \text{if } \tau \ge n  \\
        0 \quad & \text{if } \tau < n \end{cases}  \; ,
\end{equation*}
hence $W \sim \mbox{Bernoulli}(p)$.  If we independently simulate $V \sim \mbox{Bernoulli}(a)$ as usual and set $B =V W$, then
\[
\Pr \left( B = 1 \right) = \Pr \left( [V = 1] \cap [W = 1] \right) = \Pr \left( V = 1 \right) \Pr \left( W = 1 \right) = a p\; .
\]
That is, $B\sim {\rm Bernoulli}(a p)$, obtained by simulating a single $\tau$ and a single Bernoulli $V$.

When $a \in (1 , \infty)$, we will obtain $B\sim {\rm Bernoulli}(a p)$ via the Bernoulli factory described in Section~\ref{sec:Bernoulli}.  For now assume such a simulation is possible, then what remains to establish is a computable tail probability bound, i.e. the sequence $d(n)$ and the constant $M$.

\subsection{Tail probability bound} \label{sec:bound and algorithm}
\cite{blan:meng:2005} bound the moments of $\tau$, however they do not explicitly determine computable values $M$ and $d(n)$.  Fortunately, $\Pr(\tau \ge n)$ can be bounded above by a known constant times a known geometric mass function if $X$ satisfies a geometric drift and associated one-step minorization conditions.  We will say a drift condition holds if there exists some function $V: \sX \mapsto [1, \infty)$, some $0 < \lambda < 1$ and some $b < \infty$, such that
\begin{equation} \label{eq:drift}
E\left[  V(X_{i+1}) | X_i = x \right]  \leq \lambda V(x) +  I_{\left( x \in C \right)}  b \quad \text{for all } x \in \sX \; .
\end{equation}
In addition, we require the associated one-step minorization condition as follows; assume that $s(x)$ is bounded below by $\varepsilon$ on $C$ and that
\begin{equation} \label{eq:eps mc}
\Pr(x,A)\ge \varepsilon Q(A)\quad \mbox{ for all } x\in C,\ \ A\in{\cal B}(\sX) \; .
\end{equation}

\citet{hobe:robe:2004} provide the following bound on $\Pr(\tau \ge n)$ based results in \cite{robe:twee:1999}.  Define $A = \sup _{x \in C} E\left[  V(X_{i+1}) | X_i = x \right]$, $J = (A - \varepsilon) / \lambda$, and
\begin{equation*}
\beta^{*} = \begin{cases} \lambda^{-1} & \text{ if } J < 1 \; , \\ \displaystyle \exp \left\{ \frac{\log \lambda \log (1 - \varepsilon) }{\log J - \log (1 - \varepsilon) } \right\} \le \lambda^{-1} & \text{ if } J \ge 1 \; . \end{cases}
\end{equation*}
Then letting $\phi(\beta) = \log \beta / \log \lambda^{-1}$, if $\beta \in (1 , \beta^{*})$, we have
\begin{align}
\Pr(\tau \ge n) & \le \beta \left[ \frac{b}{\varepsilon ( 1 - \lambda)} \right] ^ {\phi(\beta)} \left[ \frac{1 - \beta ( 1 - \varepsilon ) }{1 - ( 1 - \varepsilon ) \left( J /  ( 1 - \varepsilon ) \right) ^ {\phi(\beta)} } \right] \beta^{-n} \label{eq:tau bound} \\
& = M d(n) \; , \notag
\end{align}
where $d(n) = \beta^{-n}$ and
\[
M = \beta \left[ \frac{b}{\varepsilon ( 1 - \lambda)} \right] ^ {\phi(\beta)} \left[ \frac{1 - \beta ( 1 - \varepsilon ) }{1 - ( 1 - \varepsilon ) \left( J /  ( 1 - \varepsilon ) \right) ^ {\phi(\beta)} } \right] \; .
\]
Note $\sum_{n=1}^\infty d(n) = \sum_{n=1}^\infty \beta^{-n} = \frac{1}{\beta - 1}  = D < \infty$ since $\beta \in (1 , \beta^{*})$.  Having established the inequality in \eqref{eq:tau bound}, we next detail how to generate $B\sim {\rm Bernoulli}(ap)$ when $a>1$.

\vspace*{-3pt}
\section{Bernoulli factory} \label{sec:Bernoulli}
\vspace*{-3pt}

Given a sequence $W = \{W_n\}_{n\ge 1}$ of i.i.d.\ ${\rm Bernoulli}(p)$ random variables, where $p$ is unknown, a Bernoulli factory is an algorithm that simulates a random variable $B\sim {\rm Bernoulli}(f(p))$, where $f$ is a known function.  For the exact sampling algorithm, we require a Bernoulli factory where $f(p)=ap$.  This idea arrose in \cite{asmu:glyn:thor:1992} when proposing an exact sampling algorithm for general regenerative processes.

Consider $f : S \mapsto [0,1]$, where $S \subset (0,1)$.  \citet{kean:obri:1994} show is it possible to simulate a random variable $B\sim {\rm Bernoulli}(f(p))$ for all $p \in S$ if and only if $f$ is constant, or $f$ is continuous and satisfies, for some $n \ge 1$,
\begin{equation} \label{eq:NPeq1}
\min \{ f(p) , 1 - f(p) \} \ge \min \{ p , 1-p \} ^n \quad \forall p \in S \; .
\end{equation}
While \citet{kean:obri:1994} develop the necessary and sufficient
conditions on $f$, they do not provide a detailed description of
an\vadjust{\eject}
algorithm.  \citet{nacu:pere:2005} suggest a constructive algorithm via
Bernstein polynomials for fast simulation, i.e.\ the number of input
${\rm Bernoulli}(p)$ variates needed for the algorithm has
exponentially bounded tails.  However, we find no practical
implementation since it requires dealing with sets of exponential size.
Our approach is based on the recent work of
\cite{latu:kosm:papa:robe:2011}, which avoids keeping track of large
sets by introducing a single auxiliary random variable.

The general approach, in the formulation of \cite{latu:kosm:papa:robe:2011}, is to construct two random approximations to $f(p)$, denoted $U_n$ and $L_n$, which depend on $W_1,W_2,\ldots,W_n$ and satisfy
\begin{equation}\label{eq:UnLn}
1\ge U_n=U_n(W_1,\dots,W_n)\: \ge  U_{n+1} \ge L_{n+1}\ge L_n=L_n(W_1,\dots, W_n)\ge 0\: \mbox{ a.s. }
\end{equation}
The random variables $U_n$ and $L_n$ approximate $f(p)$ in the sense that $E(U_n)\searrow f(p)$ and $E(L_n)\nearrow f(p)$ as $n\rightarrow \infty$.  The decision to continue sampling or output a zero or a one in the Bernoulli factory is made using an auxiliary ${\rm Uniform}(0,1)$ variable.





\begin{remark}
The almost sure monotonicity requirement in \eqref{eq:UnLn} is
typically difficult to attain and thus \cite{latu:kosm:papa:robe:2011}
relax it by using super/sub\-mar\-tingales instead.
\end{remark}

\subsection{Modified target function}

For the rejection sampling approach to simulating $T$, we have the
ability to simulate $W$ by setting $W_i = I(\tau_i \ge n)$ for $i \ge
1$.  We require a single $B\sim {\rm Bernoulli}(a p)$, where
$a=1/[Md(n)]$ is a known constant such that $a > 0$.  The outcome $B$
determines if we accept or reject the proposed value.  For $a \in (0 ,
1 ]$ we use the simple solution in Section~\ref{sec:rejection} and for
$a \in (1 , \infty ) $ we use the Bernoulli factory.

Unfortunately, the function $f(p)=ap$ on $(0, 1/a)$ does not satisfy \eqref{eq:NPeq1} and cannot be simulated via the Bernoulli factory.  However, when restricted to $f(p)=\min\{ap, 1-\omega\}$ for $\omega >0$, such a simulation is possible.

\cite{nacu:pere:2005} and \cite{latu:kosm:papa:robe:2011} provide a detailed algorithm for $a=2$ and $0<\omega < 1/4$.  
Their construction requires a minimum of 65,536 input variables (see Table~\ref{tab:BF}) before the requirement $U_n \le 1$ at \eqref{eq:UnLn} is met.  This is due to the fact that $f(p)=\min\{ 2p, 1-\omega\}$ is not differentiable and the Bernstein polynomials can approximate general Lipschitz functions at a rate of $1/\sqrt{n}$ \citep[see part (i) of Lemma 6 from][]{nacu:pere:2005}.  However, when the target function $f$ is twice differentiable, the rate increases to $1/n$ (see part (ii) of the same Lemma).

This suggests the number of ${\rm Bernoulli}(p)$ input variates required may decrease significantly by using a twice differentiable $f$.  With this in mind, we propose extending $ap$ smoothly from $\left[0, \frac{1-\omega}{a}\right]$ to $[0,1]$.  Fix $\delta < \omega$ and consider the following function
\[
F:\left[0,1-\frac{1-\omega}{a}\right]\rightarrow [0, \delta)\quad \quad F(p) = \delta\int_0^{ap/\delta} e^{-t^2} dt \; ,
\]
which is bounded by $\delta$ and twice differentiable, with $F'(p) = a \exp\{ -a^2p^2/\delta^2 \}$ and $F''(p)=-2p\frac{a^3}{\delta^2} \exp\{-a^2p^2/\delta^2\} \le 0$.  Standard calculus also gives $|F''(p)| \le a^2\frac{\sqrt{2}}{\delta\sqrt{e}}$.  ($F$ is related to the Gauss error function ($\rm erf$), though it can be simply calculated from a standard normal distribution function.)

Then define our target function $f$ as
\begin{equation} \label{eq:target f}
f(p)= \begin{cases}
ap & \displaystyle \mbox{ if } p\in\left[0, \frac{1-\omega}{a}\right)\\[9pt]
\displaystyle (1-\omega)+F\left(p-\frac{1-\omega}{a}\right) & \displaystyle \mbox{ if } p\in\left[\frac{1-\omega}{a}, 1\right]
\end{cases} \; .
\end{equation}
In other words, we have extended $ap$ such that $f$ defined at
\eqref{eq:target f} is twice differentiable with $|f''|\le C\equiv
a^2\frac{\sqrt{2}}{\delta\sqrt{e}}$.  Define $a(n,k) = f(k/n)$ and
$b(n,k)=a(n,k)+C/(2n)$ using $f$ at \eqref{eq:target f}, then we can
state Algorithm 4 of \cite{latu:kosm:papa:robe:2011} with our
modification.

\noindent\hrulefill
\vspace*{-6pt}
\begin{algorithm}\label{algI}

\mbox{}
\begin{enumerate}
\item Simulate $G_0\sim {\rm Uniform}(0,1)$.
\item Compute $m = \min \{ m \in \N : b( 2^m , 2^m ) \le 1 \}$.  Set $n=2^m$, $\tilde{L}_{\frac{n}{2}} = 0$ and $\tilde{U}_{\frac{n}{2}} = 1$.
\item Compute $H_n = \sum_{i=1}^{n} W_{i}$, $L_n = a(n, H_n)$ and $U_n = b(n, H_n)$.
\item Compute
 \begin{equation*}
L_{n}^{*} = \sum_{i=0}^{H_n} \frac{ \displaystyle {n-\frac{n}{2} \choose H_n - i} {\frac{n}{2} \choose i} }{ \displaystyle {n \choose H_n} } a\left( \frac{n}{2}, i \right) \text{ and }
U_{n}^{*} = \sum_{i=0}^{H_n} \frac{ \displaystyle {n-\frac{n}{2} \choose H_n - i} {\frac{n}{2} \choose i} }{ \displaystyle {n \choose H_n} } b\left( \frac{n}{2}, i \right) \; .
\end{equation*}
\item Compute
\begin{equation*}
\tilde{L}_{n} = \tilde{L}_{\frac{n}{2}} + \frac{L_{n} - L_{n}^{*}}{U_{n}^{*} - L_{n}^{*}} \left( \tilde{U}_{\frac{n}{2}} - \tilde{L}_{\frac{n}{2}} \right) \text{ and }
\tilde{U}_{n} = \tilde{U}_{\frac{n}{2}} - \frac{U_{n}^{*} - U_{n}}{U_{n}^{*} - L_{n}^{*}} \left( \tilde{U}_{\frac{n}{2}} - \tilde{L}_{\frac{n}{2}} \right) \; .
\end{equation*}
\item If $G_0\le \tilde{L}_{n}$ set $B=1$; if $G_0\ge \tilde{U}_{n}$ set $B=0$.
\item If $\tilde{L}_{n}<G_0<\tilde{U}_{n}$, set $n= 2 n$, return to step 3.
\item Output $B$.
\end{enumerate}
\end{algorithm}

\vspace*{-4mm}
\noindent\hrulefill

\begin{theorem} \label{th:BF}
Suppose $a$, $\delta$ and $\omega$ are constants such that $0 < a <
\infty$ and $0 < \delta < \omega < 1$.  Further suppose $f(p)$ as
defined at \eqref{eq:target f}, $a p \in [0 , 1-\omega]$ and
$W=\{W_n\}_{n\ge 1}$ are i.i.d.\ ${\rm Bernoulli}(p)$.  Then {\rm
Algorithm~\ref{algI}} outputs $B\sim {\rm Bernoulli}( ap )$. Moreover
the probability that it needs $N > n$ iterations equals $a^2 / n \delta
\sqrt{2 e}$.
\end{theorem}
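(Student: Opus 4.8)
The plan is to handle the two claims in turn: that Algorithm~\ref{algI} returns a ${\rm Bernoulli}(ap)$ variate, and that the number $N$ of ${\rm Bernoulli}(p)$ inputs it consumes satisfies $\Pr(N>n)=a^2/(n\delta\sqrt{2e})$.

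For correctness I would verify that $f$ at \eqref{eq:target f} meets the hypotheses under which the doubling construction of \citet{latu:kosm:papa:robe:2011} is provably exact; Algorithm~\ref{algI} is their algorithm with $a(n,k)=f(k/n)$ and $b(n,k)=a(n,k)+C/(2n)$, $C=a^2\sqrt2/(\delta\sqrt e)$. Four properties of $f$ carry the argument. First, $f$ is continuous on $[0,1]$ with $\sup_p f(p)\le(1-\omega)+\delta<1$, so it is simulable in the sense of \citet{kean:obri:1994}. Second, $f$ is twice differentiable with $|f''|\le C$: this is the computation recorded just before the theorem, together with $F'(0)=a$, which makes the two branches of \eqref{eq:target f} join in $C^1$ fashion at $p=(1-\omega)/a$. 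Third, $f$ is concave, since $f'$ is continuous and nonincreasing ($f''\equiv0$ on the linear branch, $f''=F''\le0$ on the other), so the Bernstein polynomials $B_n(f)(p)=\sum_k f(k/n)\binom{n}{k} p^k(1-p)^{n-k}=E[a(n,H_n)]$ increase to $f(p)$. Fourth, the second-order Bernstein estimate $0\le f(p)-B_n(f)(p)\le\tfrac{C}{2}\tfrac{p(1-p)}{n}\le\tfrac{C}{8n}$ gives $E[b(n,H_n)]=B_n(f)(p)+C/(2n)\ge f(p)$ and also $E[b(2n,H_{2n})]\le E[b(n,H_n)]$ along the doubling sequence (because $B_{2n}(f)-B_n(f)\le f-B_n(f)\le C/(8n)$). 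Hence $L_n=a(n,H_n)$ and $U_n=b(n,H_n)$ are true lower/upper envelopes with $E[L_n]\uparrow f(p)$ and $E[U_n]\downarrow f(p)$; the quantities $L_n^{*},U_n^{*},\tilde L_n,\tilde U_n$ of Steps 4–5 are precisely the device of \citet{latu:kosm:papa:robe:2011} that promotes this to the almost sure nesting $0\le\tilde L_{n/2}\le\tilde L_n\le\tilde U_n\le\tilde U_{n/2}\le1$ in \eqref{eq:UnLn} while keeping monotone convergence in mean, the key input being the hypergeometric identity $E[a(n/2,H_{n/2})\mid H_n]=L_n^{*}$ and its analogue for $b$. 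The choice of $m$ in Step 2 makes $0\le L_n\le U_n\le1$ from the first pass on, so Step 6 is well posed. Since $G_0\sim{\rm Uniform}(0,1)$ is independent of $\{W_i\}$, monotone convergence gives $\Pr(B=1)=\Pr(G_0\le\lim_n\tilde L_n)=\lim_n E[\tilde L_n]=f(p)$. Finally $ap\le1-\omega$ forces $p\le(1-\omega)/a$, where $f(p)=ap$ by \eqref{eq:target f}, so $B\sim{\rm Bernoulli}(ap)$.

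For the tail of $N$, observe that conditionally on $W_1,\dots,W_n$ the algorithm has not terminated after consuming $n$ inputs exactly when $\tilde L_n<G_0<\tilde U_n$, which has conditional probability $\tilde U_n-\tilde L_n$; by the a.s.\ nesting this coincides with $\{N>n\}$, so $\Pr(N>n)=E[\tilde U_n-\tilde L_n]$. Rearranging the Step-5 recursion gives
\[
\tilde U_n-\tilde L_n=\bigl(\tilde U_{n/2}-\tilde L_{n/2}\bigr)\,\frac{U_n-L_n}{U_n^{*}-L_n^{*}},
\]
and since $b(n,k)-a(n,k)=C/(2n)$ is independent of $k$, both $U_n-L_n=C/(2n)$ and (the hypergeometric weights in Step 4 summing to one) $U_n^{*}-L_n^{*}=C/n$ are deterministic. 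Thus the ratio is identically $\tfrac12$, so $\tilde U_n-\tilde L_n$ is nonrandom and halves at each doubling, and unwinding it back to the base case set in Step 2 yields $\tilde U_n-\tilde L_n=C/(2n)$; hence $\Pr(N>n)=C/(2n)=a^2/(n\delta\sqrt{2e})$.

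The main obstacle is the correctness half — more precisely, upgrading ``$E[L_n]\uparrow f(p)$, $E[U_n]\downarrow f(p)$, valid envelopes'' into the almost sure nesting of $\tilde L_n,\tilde U_n$ and the identity $E[\lim_n\tilde L_n]=f(p)$. That is the subtle content of \citet{latu:kosm:papa:robe:2011}'s correctness proof, which I would invoke, confining my own verification to the four analytic facts about $f$ and to the admissibility of the Step-2 initialization. A secondary point that needs care is the bookkeeping that converts the deterministic recursion for $\tilde U_n-\tilde L_n$ into the exact leading constant $C/(2n)$ rather than merely an $O(1/n)$ rate; the remainder of the running-time computation is routine.
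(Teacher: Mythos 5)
Your route is the same as the paper's: verify that $f$ at \eqref{eq:target f} is bounded away from one, concave, and twice differentiable with $|f''|\le C$; check the Nacu--Peres coefficient inequalities for $a(n,k)$ and $b(n,k)$ (the lower one by Jensen's inequality against the hypergeometric conditional law, using concavity); and delegate the construction of the almost surely nested $\tilde L_n,\tilde U_n$ and the termination identity to \citet{latu:kosm:papa:robe:2011} (their Proposition 3.1 and Theorem 2.5) together with Lemma 6 of \citet{nacu:pere:2005}. That is precisely the paper's proof, and your four analytic facts about $f$ are its conditions (i) and (ii). One looseness: the inequality for $b$ must hold conditionally, $b(2n,k)\le E[\,b(n,X)\mid H_{2n}=k\,]$ for \emph{every} $k$ with $X$ hypergeometric $(2n,k,n)$ --- this is what makes $\tilde U_n$ decreasing almost surely --- whereas you verify only the binomially averaged version; the pointwise bound $f(k/2n)-E[f(X/n)]\le \tfrac{C}{2}\mathrm{Var}(X/n)\le C/(4n)$ is exactly part (ii) of that Lemma 6, which you should cite or prove as the paper does.

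The one genuine problem is your explicit unwinding of the tail recursion. With Step 2's initialization $\tilde L_{n_0/2}=0$, $\tilde U_{n_0/2}=1$ (where $n_0=2^m$) and with $U_{n_0}^*-L_{n_0}^*=C/n_0$ computed as in Step 4, the first pass gives $\tilde U_{n_0}-\tilde L_{n_0}=\tfrac12$, so halving at each doubling yields $\tilde U_n-\tilde L_n=n_0/(2n)$, not $C/(2n)$; the two agree only if $n_0=C$, which Step 2 does not arrange (for instance $a=2$, $\delta=1/6$ gives $C\approx 20.6$ but $n_0=256$). The constant $C/(2n)$ --- and, for that matter, the identity $E[\tilde L_n]=E[L_n]=B_nf(p)$ on which your correctness argument also rests --- emerges only if on the first iteration the starred quantities are taken to be the conditional expectations of the initialization values, i.e.\ $L_{n_0}^*=0$ and $U_{n_0}^*=1$, as in the original formulation of \citet{latu:kosm:papa:robe:2011}. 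The paper sidesteps this by quoting their Theorem 2.5 for the tail rather than recomputing it; if you keep the explicit computation, you must repair the base case.
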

\begin{proof}
See Appendix~\ref{sec:proof}.
\end{proof}
\begin{remark}
Note that the probability Algorithm~\ref{algI} needs $N > n$ iterations is
independent of the unknown value of $p$.  Hence the number of
$\mbox{Bernoulli}(p)$ variates required will also be independent of
$p$.
\end{remark}

Algorithm~\ref{algI} provides a constructive algorithm for $a \in (1 , \infty)$
and reduces the number of input variables by a factor of over 100,
which we demonstrate in the following example.  This is a critical
improvement since the Bernoulli factory accounts for most of the
computational demands of the exact sampling algorithm.\looseness=-1

\subsection{Bernoulli factory example}

Consider generating 10,000 $\mbox{Bernoulli}(ap)$ variates for various
values of $a$ while setting $p=0.01$, $\omega=1/5$ and $\delta=1/6$.
Table~\ref{tab:BF} displays the minimum number of $\mbox{Bernoulli}(p)$
variates required, along with the observed mean and standard deviation
for the count of $\mbox{Bernoulli}(p)$ variates used to generate 10,000
$\mbox{Bernoulli}(a p)$ variates.  We can see for $a=2$ the minimum and
observed mean have been reduced substantially when comparing our target
function with the \cite{nacu:pere:2005} target (N\&P).  The reduction
in observed input\break $\mbox{Bernoulli}(p)$ variates represents a 120 times
reduction in computational time.  The N\&P implementation always stops
the simulation at the minimum, and hence the standard deviation of the
count is 0.  Table~\ref{tab:BF} also shows the input variates required
increases as $a$ increases.  Simulations for other $p$ values (not
shown) provide very similar results.

\begin{table}
\centering
\caption{Comparison of count of input $\mbox{Bernoulli}(p)$ variates to implement the Bernoulli factory.}
\label{tab:BF}
\begin{tabular}{c | c | cccc }
a & 2 (N\&P) & 2  & 5 & 10 & 20 \\ \hline
Minimum & 65536 & 256 & 2048 & 8192 & 32768 \\
Mean Count & 65536 & 562.9 & 2439.8 & 10373 & 43771 \\
S.D. Count & 0 & 21046 & 7287.6 & 54836 & 3.908e5\\
\end{tabular}
\end{table}

\section{Exact sampling algorithm} \label{sec:exact}

The Bernoulli factory algorithm, Theorem~\ref{th:BF}, requires $ap <
1-\omega$.  To this end, let $\kappa > 1$ such that $1/\kappa < 1-
\omega$ and from \eqref{eq:tau bound}
\[
\Pr(\tau \ge n) \le M d(n) < M d(n) \kappa \; .
\]
Then letting $a = 1 / \left[ M d(n) \kappa \right]$ we have $a \Pr(\tau
\ge n) \le 1 / \kappa < 1-\omega<1$ for all $n$.  The following
algorithm results in exact draws from $\pi$.

\noindent\hrulefill

\noindent {\rm Exact sampling algorithm for $\pi$:}
\begin{enumerate}
\item Draw $ T^* \sim \mbox{Geometric} \left( 1 - 1 / \beta \right) $, i.e.\ $\Pr \left( T^* = k \right) = \left( 1 / \beta \right)^{k-1} \left( 1 - 1 / \beta \right) $ for $k = 1, 2, \dots$, and call the result $n$.  Set $a = 1 / \left[ M d(n) \kappa \right]$.
\item If $a \le 1$, draw a single $\tau$ random variable.  Let $W = I(\tau \ge n)$ and independently draw $V \sim \mbox{Bernoulli} \left( a \right)$.  Set $B=VW$.
\item If $a > 1$, use the Bernoulli factory to obtain $B$, a single Bernoulli random variable with success probability $p = a \Pr(\tau \ge n)$.
\item If $B=1$, accept $T^*=n$; if $B=0$, return to Step 1.
\item Make a draw from $Q_n(\cdot)$.
\end{enumerate}

\vspace*{-4mm}
\noindent\hrulefill
\eject

The algorithm requires selection of $\varepsilon$, $C$, $\lambda$, and
$\kappa$ from a range of possible values (depending on the drift and
minorization).  Further, $\beta \in (1 , \beta^{*} ) $ must be selected
depending on the previously selected parameters.  Each selection
impacts the algorithm performance and we suggest investigation of
different settings for a given example.  Our examples in
Sections~\ref{sec:toy} and~\ref{sec:real} discuss hyper-parameter
selection and provide further recommendations.

\section{Toy examples} \label{sec:toy}

This section contains two toy examples in the sense that we can obtain
i.i.d.\ samples for each, hence there is no practical reason for
considering MCMC algorithms.  The purpose is to gain insights into the
exact sampler and study its performance.

\subsection{Metropolis-Hastings example}

This section illustrates the exact sampling algorithm for a
Metropolis-Hastings sampler.  Suppose that $\sX = [0,\infty)$ and
$\pi(dx) = f_X(x) \; dx$ where $f_X(x) = e^{-x} \; I (x \ge 0) $.
Consider the function $V(x) =e^{cx}$ for some $c>0$ and suppose we use
a Metropolis sampler with a symmetric proposal density $g(\cdot|x)$,
which is supported on $[x-\gamma, x+\gamma]$, $\gamma>0$. Then, for
$x>\gamma$,
\begin{align*}
E[V(X_{i+1}) | X_i\,{=}\,x] &= \int_{x-\gamma}^xV(y)g(y|x)dy+\int_x^{x+\gamma}V(y)g(y|x)dy\frac{f(y)}{f(x)}\\[-2pt]
& + \int_x^{x+\gamma}V(x)g(y|x)dy\left(1-\frac{f(y)}{f(x)}\right) \\[-2pt]
&=\!\int_x^{x+\gamma}\!\biggl(\!V(2x\,{-}\,y)\,{+}\,V(y)\frac{f(y)}{f(x)}\,{+}\,V(x)
\biggl(\!1\,{-}\,\frac{f(y)}{f(x)}\!\biggr)\!\biggr)g(y|x)dy \\[-2pt]
&= V(x) \int_x^{x+\gamma}\!\left(\!e^{-c(y-x)}\,{+}\,e^{(c-1)(y-x)}\,{+}\,1\,{-}\,e^{-(y-x)}\!\right)\! g(y|x)dy.
\end{align*}
Selecting $g$ to be the uniform density $g(y|x)=\frac{1}{2\gamma}I_{y\in[x-\gamma, x+\gamma]}$, we get
\begin{eqnarray}
E[V(X_{i+1})|X_i=x] = V(x)\frac{1}{2\gamma}\int_0^\gamma \left( e^{-cz}+e^{(c-1)z}+1-e^{-z} \right) dz\: .
\label{r1}
\end{eqnarray}
When $x\in[0,\gamma]$,
\begin{align}
E[V(X_{i+1}) | X_i=x] &= \int_0^xV(y)g(y|x)dy+\int_x^{x+\gamma}V(y)g(y|x)dy\frac{f(y)}{f(x)}\nonumber \\[-2pt]
&+ \int_x^{x+\gamma}V(x)g(y|x)dy\left(1-\frac{f(y)}{f(x)}\right) \nonumber \\[-2pt]
&= \int_0^x e^{cy}\frac{1}{2\gamma}dy + \frac{e^{cx}}{2\gamma} \int_{x}^{x+1}e^{(c-1)(y-x)}dy \nonumber \\[-2pt]
&+ \frac{e^{cx}}{2\gamma} \int_x^{x+1} \left( 1-e^{-(y-x)} \right) dy \; .
\label{r2}
\end{align}
When combined, \eqref{r1} and \eqref{r2} obtain a drift condition for
the Metropolis-Hastings sampler considered. However, the selection of
the constants $c$ and $\gamma$ is crucial to obtaining a reasonable
computation time. The constants $\beta$ and $M$, which are described in
Section \ref{sec:bound and algorithm}, also depend on $c$ and $\gamma$.
Based on our example in Section~\ref{sec:Bernoulli}, our strategy is to
maximize $\beta$, which in turn results in small values for
$a=1/[Md(n)\kappa]$. Figure~\ref{fig:MH beta} displays a contour map of
$\beta^*$ for $c<0.3$ and $\gamma<10$. Based on this plot, we select
$c=0.028$ and $\gamma=4$, resulting in $\beta=1.0243$ (as seen below).
Evaluating the integrals in \eqref{r1} and \eqref{r2} gives the drift
condition
\begin{eqnarray*}
E[V(X_{i+1}) | X_i=x]\le \lambda V(x) + I_{\left( x \in C \right)}  b
\end{eqnarray*}
where $\lambda=0.977$, $b=0.1$ and the small set $C=[0,4]$. The
interval $[0,\gamma]$ is indeed a small set, since, when
$x\in[0,\gamma]$,
\begin{eqnarray*}
P(x,dy)&\ge& g(y|x)dy\min\left\{1, \frac{f(y)}{f(x)}\right\}=\frac{1}{2\gamma}I_{y\in[0,x+\gamma]}dy\min\left\{1, \frac{f(y)}{f(x)}\right\}\\
&\ge& \frac{1}{2\gamma}I_{y\in[0,\gamma]} e^{-y} dy\\
&=& \frac{1-e^{-\gamma}}{2\gamma}\left( \frac{1}{1-e^{-\gamma}}e^{-y}I_{y\in[0,\gamma]}dy\right) \; .
\end{eqnarray*}
This establishes the necessary minorization condition $P(x,dy)\ge s(x)Q(dy)$ where
\[
Q(dy)=q(y) dy =\frac{1}{1-e^{-\gamma}}e^{-y}I_{y\in[0,\gamma]}dy \text{ and } s(x)=\frac{1-e^{-\gamma}}{2\gamma}I_{x\in[0,\gamma]} \; .
\]
The remaining numerical elements required for the geometric probability
bound are: $A=\sup_{x\in C} E[V(X_{i+1}|X_i=x]) =  1.09197$ and $J =
(A-\varepsilon)/\lambda = 0.99283< 1$. Then $\beta =1/\lambda= 1.0243$.
Finally, the Bernoulli factory hyper-parameters are $\kappa = 5/4$ and
$\delta = 1/6$ resulting in $\omega = 0.2$.

Following \cite{mykl:tier:yu:1995}, we can now simulate the split chain
as follows: (1) draw $X_{n+1}$ from $g(\cdot|X_n)$ and accept it with
probability $\min\bigl\{1, \frac{f(X_{n+1})}{f(X_n)}\bigr\}$; (2) if
the candidate in step (1) is rejected, set $\delta_n=0$, otherwise,
generate $\delta_n$ as a Bernoulli variate with success probability
given by
 \[
 Pr(\delta_n=1)= \frac{s(X_n)q(X_{n+1})}{g(X_{n+1}|X_n)\min\left\{1, \frac{f(X_{n+1})}{f(X_n)}\right\}} \; .
\]

\begin{figure}[tbp] \centering
\subfigure[Contour map of $\beta^*$.]{\includegraphics[width=2.4in]{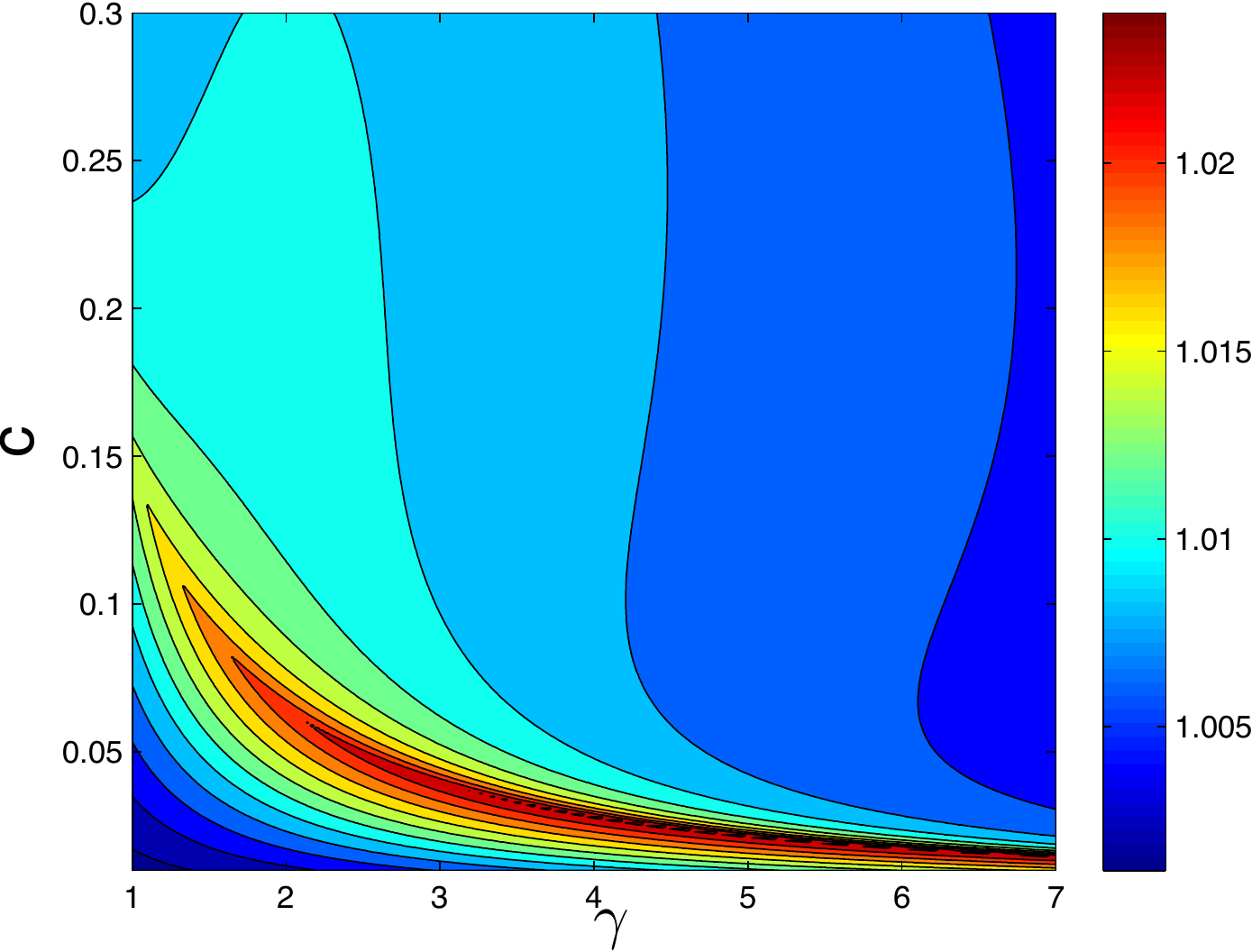} \label{fig:MH beta} }
\subfigure[Q-Q plot.]{\includegraphics[width=2.2in]{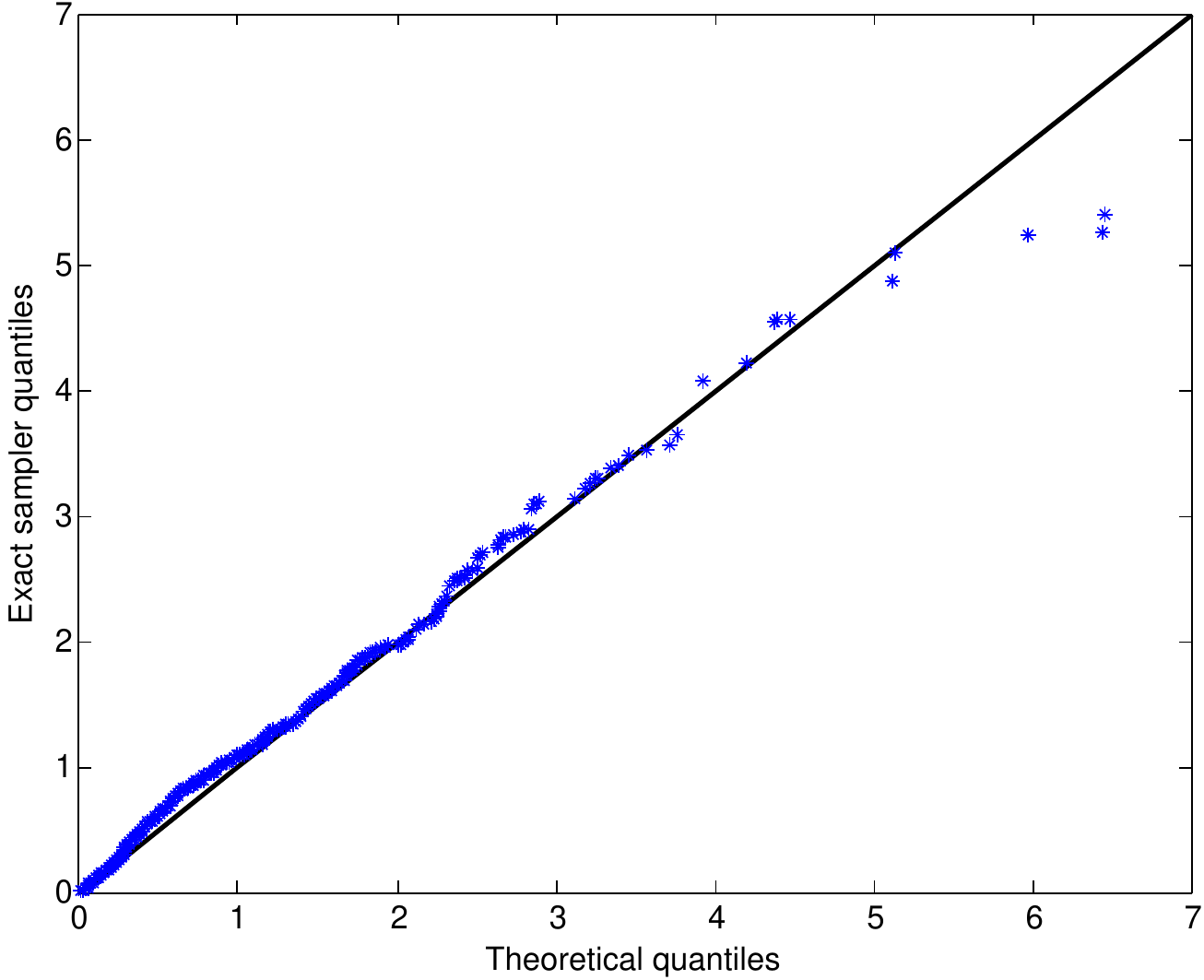} \label{fig:MH qq} }
\caption{Plots for Metropolis-Hastings example.}
\vspace*{-6pt}
\end{figure}

Using the exact sampling algorithm, we generated 1000 i.i.d.\ Exp(1)
random variates.  Figure~\ref{fig:MH qq} shows a Q-Q plot of the
observed draws versus the theoretical quantiles of the ${\rm Exp}(1)$
distribution.  During our simulations none of the proposed $T^*$ values
which required the use of the Bernoulli factory were accepted.  This is
due to the fact that our Metropolis-Hastings sampler regenerates very
fast, roughly in about 20 moves.  Thus for large $T^*$, when the
Bernoulli factory is necessary, the probability $\Pr(\tau > T^*)$ is
negligible and the Bernoulli factory outputs a zero.  Improvements via
modified drift and minorization or hyper-parameter selection may
improve this situation.

\subsection{Gibbs example}

Suppose $Y_{i} | \mu, \theta \sim \text{N} (\mu, \theta)$ independently
for $i=1,\ldots, m$ where $m \ge 3$ and assume the standard invariant
prior $\nu(\mu, \theta) \propto \theta^{-1/2}$.  The resulting
posterior density is
\begin{equation} \label{eq:gibbs posterior}
\pi(\mu, \theta | y) \propto \theta^{-(m+1)/ 2} \exp \left\{ - \frac{m}{2\theta} (s^{2} + (\bar{y} - \mu)^{2}) \right\}
\end{equation}
where $s^{2}$ is the usual biased sample variance.  It is easy to see
the full conditional densities, $f( \mu | \theta, y)$ and $f( \theta |
\mu, y)$, are given by $\mu | \theta, y \sim \text{N}(\bar{y},
\theta/m)$ and $\theta | \mu, y \sim \text{IG} ((m-1)/2, m \left[ s^{2}
+ (\bar{y} - \mu)^{2} \right] /2)$, hence a Gibbs sampler is
appropriate.  (We say $W \sim \text{IG}(\alpha,\beta)$ if its density
is proportional to $w^{-(\alpha+1)} e^{-\beta /w} I(w > 0)$.)  We
consider the Gibbs sampler that updates $\theta$ then $\mu$; that is,
letting $x' = (\theta',\mu')$ denote the current state and $x =
(\theta,\mu)$ denote the future state, the transition looks like
$(\theta',\mu') \rightarrow (\theta,\mu') \rightarrow (\theta,\mu)$.
This Markov chain then has state space $\sX = \R ^{+} \times \R$ and
transition density
\begin{equation} \label{eq:gibbs ker}
k( \theta , \mu | \theta' , \mu') = f( \theta | \mu', y) f( \mu | \theta, y) \; .
\end{equation}
Appendix~\ref{app:Gibbs} provides a drift and minorization condition using a small set in the form
\[
C = \left\{ (\theta, \mu) \in \R^{+} \times \R : V(\mu,\theta) \le d  \right\} \: ,
\]
where $V(\theta, \mu) = 1 + \left( \mu - \bar{y} \right)^2$.

Suppose $\bar{y} = 1$, $s^{2} = 4$, and $m = 11$.  We set $\lambda =
0.5$ and $d = b / ( \lambda - (m-3)^{-1} )= 11/3$ resulting
in $\varepsilon = 0.5750034$ and $\beta^{*} = 1.3958$.  Given the
allowable range $\beta \in (1, \beta^{*} )$, we select $\beta = 1.35$,
which turns out to be extremely important for implementation.
Selection of $\beta$ very close to $\beta^*$ or close to 1 seems to
cause problems for the Bernoulli factory because of large constant
multipliers $a = 1 / \left[ M d(n) \kappa \right]$ given typical
proposed $T^{*} = n$ (which depend on $\beta$).  Experimentation has
shown us values somewhat close to $\beta^{*}$ seem to provide the best
results.  Finally, the Bernoulli factory hyper-parameters are $\kappa =
5/4$ and $\delta = 1/6$ resulting in $\omega = 0.2$.

Table~\ref{tab:gibbs constants} summarizes the resulting constants
given the hyper-parameter choices.  Notice the Bernoulli factory will
be necessary for proposed values greater than or equal to 10, that is
with probability 0.067.  The constants for values greater than 20 are
not listed.  However, these values are of interest since the minimum
number of observed $\tau$ values becomes extremely large.  While values
in this range are uncommon, $P\left( T^{*} > 20 \right) \approx 0.002$,
they occur with enough frequency to slow down the algorithm
substantially.

\begin{table}[t!]
\caption{Summary constants for Gibbs sampler example with $\beta =
1.35$.  The row $\min$ refers to the minimum number of observed $\tau$s
to implement the Bernoulli factory and $p(n) =  P \left( T^{*} = n\right)$.} \label{tab:gibbs constants}
\begin{tabular*}{\textwidth}{@{\extracolsep\fill}c|cccccccccc@{}}
$ n $ & 1 &  2 &  3 &  4 &  5 &  6 &  7 &  8 &  9 & 10 \\
\hline
$p(n)$ & 0.259 & 0.192 & 0.142 & 0.105 & 0.078 & 0.058 & 0.043 & 0.032 & 0.023 & 0.017 \\
$a$ & 0.08 & 0.11 & 0.14 & 0.19 & 0.26 & 0.35 & 0.47 & 0.64 & 0.86 & 1.16 \\
$\min$ & - & - & - & - & - & - & - & - & - & 128 \\
\hline
$ n $ & 11 & 12 & 13 & 14 & 15 & 16 & 17 & 18 & 19 & 20 \\
\hline
$p(n)$ & 0.013 & 0.010 & 0.007 & 0.005 & 0.004 & 0.003 & 0.002 & 0.002 & 0.001 & 0.001 \\
$a$ & 1.57 & 2.12 & 2.87 & 3.87 & 5.22 & 7.05 & 9.52 & 12.85 & 17.35 & 23.42 \\
$\min$ & 128 & 256 & 512 & 1024 & 2048 & 4096 & 8192 & 8192 & 16384 & 32768 \\
\end{tabular*}
\end{table}

The exact sampling algorithm was used to generate 1000 exact draws from
the posterior density at \eqref{eq:gibbs posterior}.  Generating the
1000 draws required approximately 35 hours of computational time, about
2 minutes per draw.  A total of 27,665 $T^{*}$ values were proposed of
which 69 (0.25\%) were greater than 20.  Implementing the Bernoulli
factory required 1.52e9 $\tau$ values, or 1.52e6 $\tau$ values per
exact draw.  However, most of the computational time, and necessary
$\tau$ values, were used for a small number of proposed $T^{*}$ values.
Similar to the Metropolis-Hastings example, none of the accepted
$T^{*}$ values were from the Bernoulli factory.

The largest proposed $T^{*}$ value was 37 with $a \approx 3848$
requiring 1.07e9 $\tau$ values (of which 0 were $\ge 37$) to implement
the Bernoulli factory.  This value alone accounted for about 70\% of
the total number of $\tau$ values, and hence about 70\% of the total
computational time.  It should be noted that the largest accepted
$T^{*}$ value was 9, so proposals unlikely to be accepted account for
most of the computational demands.  Removing only the largest proposal,
the remaining 999 draws required approximately 10 hours of computation,
or about 35 seconds per draw.\looseness=-1

\begin{figure}[tbp] \centering
\vspace*{-12pt}
\subfigure[Q-Q plot for $\mu$.] {\includegraphics[width=2.3in]{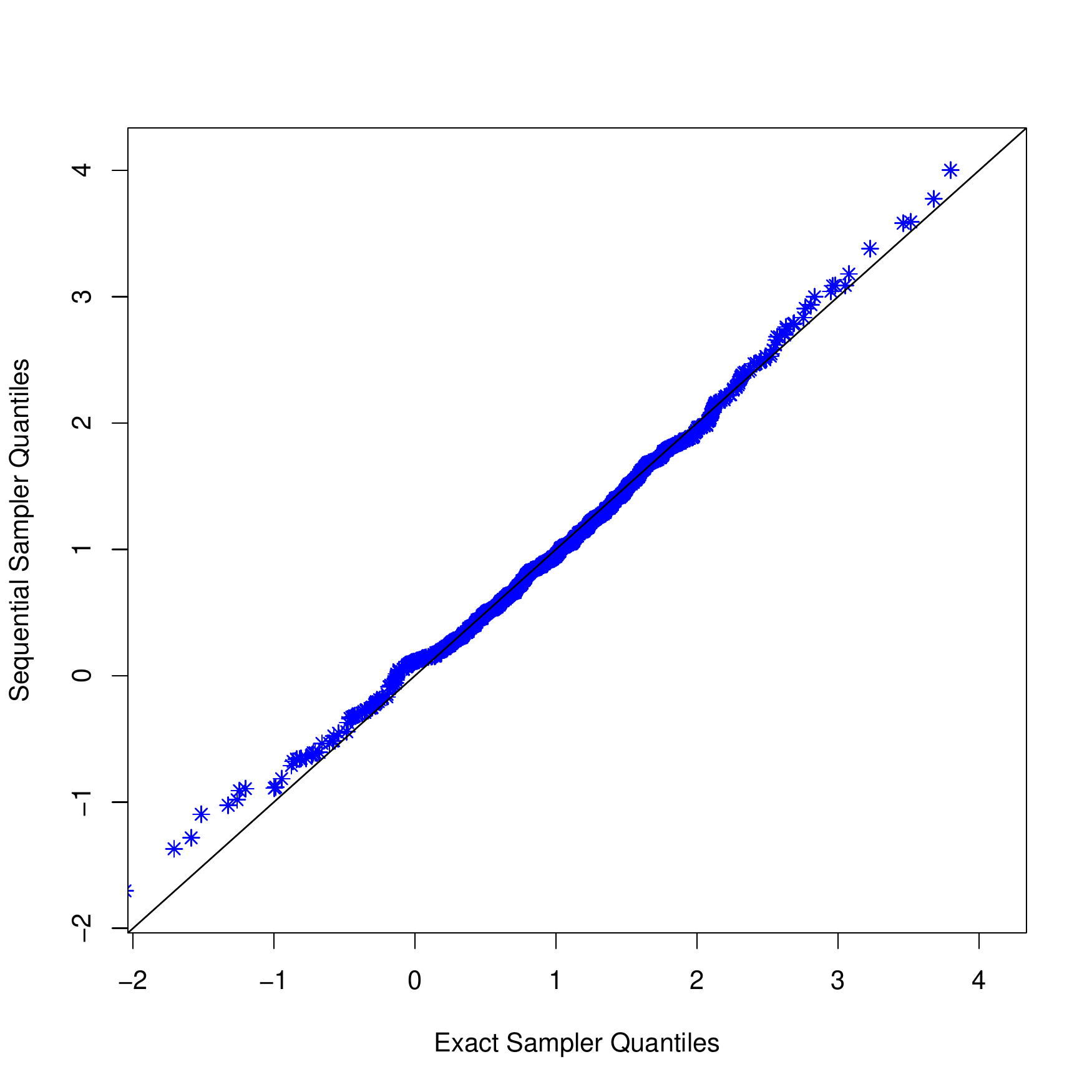} \label{fig:qq mu} }
\subfigure[Q-Q plot for $\theta$.]  {\includegraphics[width=2.3in]{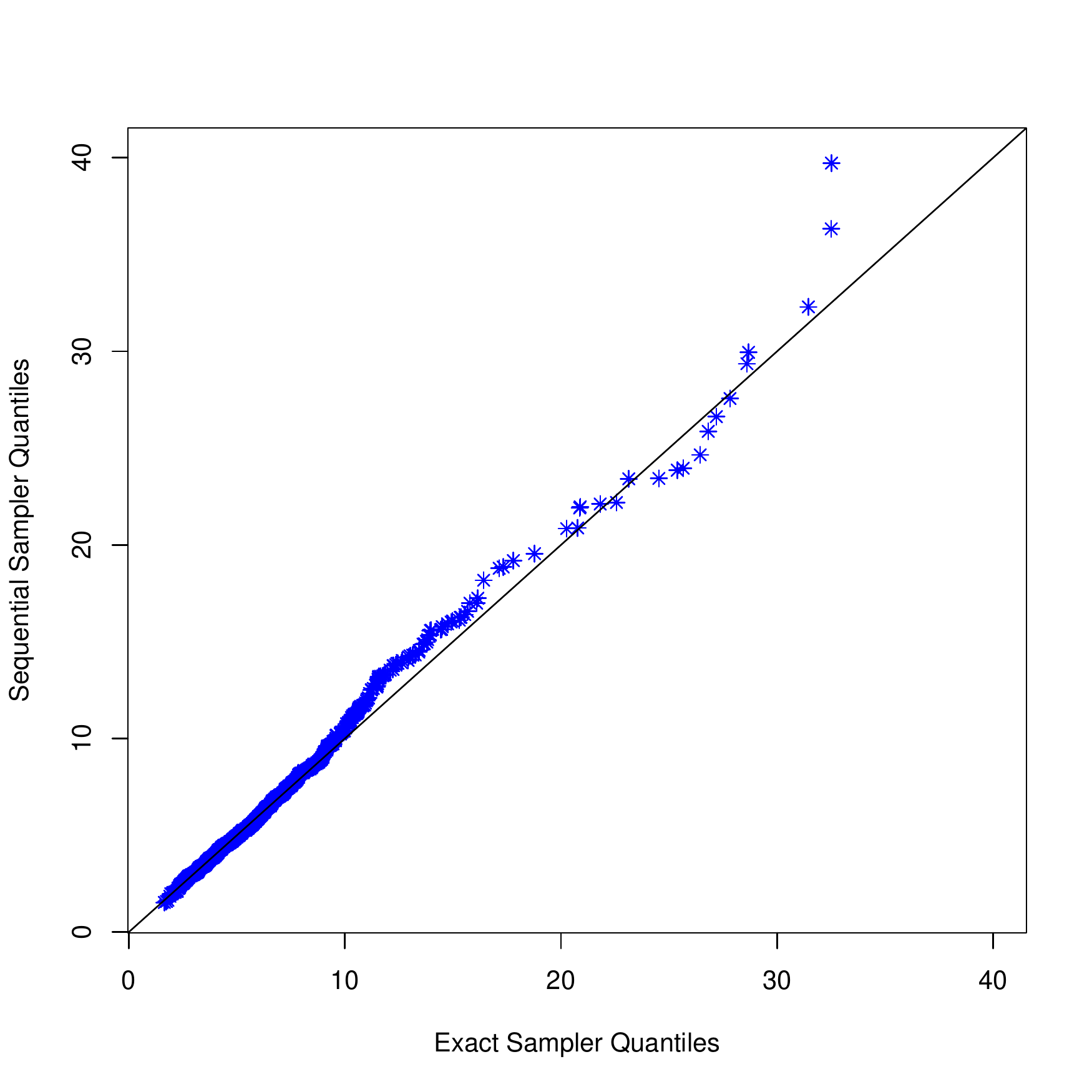} \label{fig:qq theta} }
\caption{Q-Q plots comparing 1000 exact draws to 1000 i.i.d.\ draws from a sequential sampler.}
\label{fig:gibbs qq}
\end{figure}

Alternatively for this example, we can sequentially sample from
\eqref{eq:gibbs posterior} to obtain i.i.d.\ draws
\citep{fleg:hara:jone:2008}.  Figure~\ref{fig:gibbs qq} compares the
1000 exact draws to 1000 i.i.d.\ draws from a sequential sampler using
a Q-Q plot for both $\mu$ and $\theta$.  We can see from these plots
that the exact sampling algorithm is indeed working well.\looseness=-1

\section{Bayesian random effects model} \label{sec:real}

This section considers a Bayesian version of the one-way random effects model given by
\begin{equation*}
Y_{ij}=\phi_i + \zeta_{ij},\quad i=1,\ldots,q,\: j=1,\ldots, m_i
\end{equation*}
where the random effects $\phi_i$ are i.i.d.\ $N(\mu,\sigma_\phi^2)$
and independently the errors $\zeta_{ij}$ are i.i.d.\
$N(0,\sigma_e^2)$.  Thus $( \mu, \sigma_\phi^2, \sigma_e^2
)$ is the unknown parameter.

Bayesian analysis using this model requires specifying a prior
distribution, for which we consider the family of inverse gamma priors
\begin{equation*}
\pi(\mu,\sigma_\phi^2, \sigma_e^2) \propto (\sigma_\phi^2)^{-(\alpha_1+1)}e^{-\beta_2/\sigma_\phi^2} (\sigma_e^2)^{-(\alpha_2+1)}e^{-\beta_2/\sigma_e^2}
 \end{equation*}
where $\alpha_1$, $\alpha_2$, $\beta_1$ and $\beta_2$ are
hyper-parameters.  If we let $y = \{ y_{ij} \}$ and $\phi = \{ \phi_{i}
\}$ denote the vectors of observed data and random effects
respectively, then the posterior density is as follows
 \begin{equation} \label{eq:sty posterior}
 \pi \left( \phi, \mu, \sigma_\phi^2, \sigma_e^2 \right) \propto f \left( y | \phi, \mu, \sigma_\phi^2, \sigma_e^2 \right) f \left( \phi | \mu, \sigma_\phi^2, \sigma_e^2 \right) \pi(\mu,\sigma_\phi^2, \sigma_e^2) \; ,
 \end{equation}
where
\[
f \left( y | \phi, \mu, \sigma_\phi^2, \sigma_e^2 \right) = \prod_{i=1}^{q} \prod_{j=1}^{m_i} \left( 2 \pi \sigma_e^2 \right) ^{-\frac{1}{2}} \exp \left\{ - \frac{1}{2 \sigma_e^2} \left( y_{ij} - \phi_{i} \right)^2 \right\}
\]
and
\[
f \left( \phi | \mu, \sigma_\phi^2, \sigma_e^2 \right) = \prod_{i=1}^{q} \left( 2 \pi \sigma_\phi^2 \right) ^{-\frac{1}{2}} \exp \left\{ - \frac{1}{2 \sigma_\phi^2} \left( \phi_{i} - \mu \right)^2 \right\} \; .
\]
For ease of exposition, we will suppress the dependency on the data $y$
and define the usual summary statistics: $\bar y_i = m_i ^{-1} \sum_j
y_{ij}$, $M = \sum_i m_i$, $\bar{y} = M^{-1} \sum_{i} \sum_{j} y_{ij}
$, $\text{SST} =  \sum_{i} m_i \left( \bar{y}_i - \bar{y} \right)^2 $
and $\text{SSE} = \sum_{i} \sum_{j} \left( y_{ij} - \bar{y}_i
\right)^2$.

We consider a block Gibbs sampler  that updates $\theta=(\sigma_\phi^2,
\sigma_e^2)$ then $\xi=(\mu, \phi)$, that is $(\theta', \xi')
\rightarrow (\theta, \xi') \rightarrow (\theta, \xi)$.  The necessary
full conditionals can be obtained via manipulation of \eqref{eq:sty
posterior}.  That is, $f (\theta | \xi')$ is the product of two inverse
gammas such that
\begin{equation*}
\sigma^2_\phi | \xi' \sim IG\left( \frac{q}{2}+\alpha_1, \frac{w_1(\xi')}{2}+\beta_1\right)
\end{equation*}
and
\begin{equation*}
\sigma^2_e | \xi' \sim IG\left( \frac{M}{2}+\alpha_2, \frac{w_2(\xi')+SSE}{2}+\beta_2\right)
\end{equation*}
where $w_1(\xi)=\sum_{i=1}^q(\phi_i-\mu)^2$ and $w_2(\xi) = \sum m_i
(\phi_i - \bar y_i)^2$.  Further, $f(\xi | \theta)$ is multivariate
normal density whose parameters are given in \cite{tan:hobe:2009}.
This Markov chain then has state space $\sX=\R^{q+1}\times \R^2_{+}$
and transition density
\[
k(\xi, \theta | \: \xi', \theta') = f (\theta | \xi') f(\xi | \theta) = f(\sigma_\phi^2 | \xi') f(\sigma_e^2 | \xi') f(\xi | \theta) \; .
\]

Implementation of the exact sampling algorithm requires a drift and
associated minorization condition as at \eqref{eq:drift} and
\eqref{eq:eps mc}.  \cite{hobe:geye:1998}, \cite{jone:hobe:2001,
jone:hobe:2004} and \cite{tan:hobe:2009} analyze variations of the
proposed block Gibbs sampler, however none obtain sufficient constants
for a practical implementation of our algorithm.  To this end, the
following theorem improves upon the drift constants of
\cite{tan:hobe:2009} for a balanced design while using a simplified
version of their drift function.

\begin{theorem} \label{th:drift}
Let $m_i = m$ for all $i = 1, \dots, q$ and let $\Delta_2 =1-1/[q(m+1)]
+ \max\{ q(m+1)/m^2\:, 1/m \}$.  Further let $V:\R^{q+1}\times \R^2_{+}
\rightarrow [1,\infty)$ such that $V(\xi,\theta) = K + \delta_1
w_1(\xi) + \delta_2w_2(\xi)$ where $K \ge 1$, $\delta_1 > 0$ and
$\delta_2 >0$ and define
\[
\lambda^* = \max\left\{  \frac{1}{q+2\alpha_1-2}\:, \frac{\delta_1\Delta_2/\delta_2 +q+1}{M+2\alpha_2-2} \right\} \; .
\]
Then there exists $K \ge 1$, $\delta_1 > 0$ and $\delta_2>0$ such that
$\lambda^* < 1$ and \eqref{eq:drift} holds.  That is, for any
$\lambda\in(\lambda^*, 1)$,
\begin{equation}
E(V(\xi,\theta) | \xi', \theta') \le \lambda V(\xi', \theta') + b 1_{(\xi', \theta')\in C}
\label{eq:sty drift}
\end{equation}
where
\begin{align*}
b &= K(1-\lambda) +  \frac{2\delta_1\beta_1}{q+2\alpha_1-2} + \frac{(\delta_1\Delta_2+\delta_2(q+1))(SSE+2\beta_2)}{M+2\alpha_2-2} \\
&+(\delta_1+m\delta_2) \sum_{i=1}^q(\bar y_i - \bar y)^2
\end{align*}
and
\[
C=\{ (\xi, \theta) \in \R^{q+1}\times\R_+^2\: : \: V(\xi,\theta) \le d\}
\]
where $ d = b/(\lambda-\lambda^*)$.
\end{theorem}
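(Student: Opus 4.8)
The plan is to exploit the fact that the drift function $V(\xi,\theta)=K+\delta_1 w_1(\xi)+\delta_2 w_2(\xi)$ depends on the state only through $\xi=(\mu,\phi)$, so the drift computation splits along the two sub-steps $(\theta',\xi')\rightarrow(\theta,\xi')\rightarrow(\theta,\xi)$ of the block update. Since $f(\theta\mid\xi')$ does not involve $\theta'$ and $f(\xi\mid\theta)$ does not involve $\xi'$, the tower property gives
\[
E\bigl(V(\xi,\theta)\mid \xi',\theta'\bigr)=K+\delta_1\,E_{\theta\mid\xi'}\!\bigl[E_{\xi\mid\theta}w_1(\xi)\bigr]+\delta_2\,E_{\theta\mid\xi'}\!\bigl[E_{\xi\mid\theta}w_2(\xi)\bigr],
\]
so everything reduces to the inner conditional means of $w_1,w_2$ under the multivariate normal $f(\xi\mid\theta)$ and the outer means of $\sigma_\phi^2,\sigma_e^2$ under the two independent inverse gammas.

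The outer step is immediate: from the full conditionals $\sigma_\phi^2\mid\xi'\sim\mathrm{IG}(q/2+\alpha_1,\,w_1(\xi')/2+\beta_1)$ and $\sigma_e^2\mid\xi'\sim\mathrm{IG}(M/2+\alpha_2,\,(w_2(\xi')+SSE)/2+\beta_2)$, the inverse-gamma mean gives, when $q+2\alpha_1>2$ and $M+2\alpha_2>2$, $E_{\theta\mid\xi'}\sigma_\phi^2=(w_1(\xi')+2\beta_1)/(q+2\alpha_1-2)$ and $E_{\theta\mid\xi'}\sigma_e^2=(w_2(\xi')+SSE+2\beta_2)/(M+2\alpha_2-2)$, both affine in $w_1(\xi')$ and $w_2(\xi')$ with exactly the denominators occurring in $\lambda^*$. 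For the inner step I would substitute the explicit $N(\cdot,\cdot)$ form of $f(\xi\mid\theta)$ from \cite{tan:hobe:2009}, write $E_{\xi\mid\theta}w_j(\xi)$ as variance plus squared mean, and bound: the squared-mean pieces are dominated by $\sum_i(\bar y_i-\bar y)^2$ (with an extra factor $m$ for $w_2$), while the variance pieces are linear in $\sigma_\phi^2$ and $\sigma_e^2$ whose coefficients, in the balanced case $m_i\equiv m$, collapse to $1$ and $\Delta_2$ for $w_1$ and to $0$ and $q+1$ for $w_2$; the quantity $\Delta_2=1-1/[q(m+1)]+\max\{q(m+1)/m^2,\,1/m\}$ arises as the bound on the $\sigma_e^2$-coefficient of $E_{\xi\mid\theta}w_1(\xi)$ after inverting the $(q+1)\times(q+1)$ conditional precision matrix of $\xi\mid\theta$. \emph{This is the main obstacle}: sharply controlling that conditional covariance and its cross terms, and it is where the improvement over the constants of \cite{tan:hobe:2009} is gained.

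Putting the two steps together yields $E(V(\xi,\theta)\mid\xi',\theta')\le A_1 w_1(\xi')+A_2 w_2(\xi')+B$ with $A_1,A_2,B$ explicit in $K$, $\delta_1$, $\delta_2$, the data and the hyper-parameters; a short calculation shows $A_1/\delta_1$ and $A_2/\delta_2$ equal the two entries of the maximum defining $\lambda^*$ once the free ratio $\delta_1/\delta_2$ is plugged in, so $A_i\le\lambda^*\delta_i$. Because $V\ge K$ and $w_1,w_2\ge0$, this gives $E(V\mid\xi',\theta')\le\lambda^*(V-K)+B\le\lambda(V-K)+B$ for any $\lambda\in(\lambda^*,1)$, i.e.\ a global bound $E(V\mid\xi',\theta')\le\lambda V+b$ once $b$ is identified with $B-\lambda K=K(1-\lambda)+(\text{data/hyper-parameter constant})$, matching the displayed expression. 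Upgrading this global bound to the small-set form \eqref{eq:sty drift} with $C=\{V\le d\}$ and $d=b/(\lambda-\lambda^*)$ is then the standard Rosenthal-type manipulation, with $K\ge1$ chosen suitably so that $V$ maps into $[1,\infty)$ and the off-$C$ slack absorbs the leftover constant. It only remains to record that such $K,\delta_1,\delta_2$ with $\lambda^*<1$ exist precisely when $q+2\alpha_1>3$ and $M+2\alpha_2>q+3$ (take, e.g., $\delta_1/\delta_2$ small), which is the operative hypothesis behind the "there exists" in the statement.
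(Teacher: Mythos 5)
Your proposal is correct and follows essentially the same route as the paper: condition on $\theta$ via the tower property, bound $E(w_j(\xi)\mid\theta)$ by variance plus squared mean using the Tan--Hobert covariance bound (which specializes to $\Delta_1=1$ and the stated $\Delta_2$ in the balanced case), take inverse-gamma means over $\theta\mid\xi'$, and read off $\lambda^*$ from the coefficients of $\delta_1 w_1(\xi')$ and $\delta_2 w_2(\xi')$. Your added remark that $\lambda^*<1$ is achievable iff $q+2\alpha_1>3$ and $M+2\alpha_2>q+3$ (taking $\delta_1/\delta_2$ small) is a correct explicit rendering of the paper's terse existence claim.
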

\begin{proof}
See Appendix~\ref{ap:sty}.
\end{proof}

The drift condition still holds if we increase the small set to
\[
C=\{ (\xi, \theta) \in \R^{q+1}\times\R_+^2\: : \: K+\delta_1 w_1(\xi) \le d, K+\delta_2 w_2(\xi)\le d\} \; ,
\]
for which Appendix~\ref{ap:sty} provides the associated minorization condition.

\subsection{Styrene exposure dataset}

We will implement the exact sampling algorithm using the styrene
exposure dataset from \cite{lyle:kupp:rapp:1997} analyzed previously by
\cite{jone:hobe:2001} and \cite{tan:hobe:2009}.  The data, summarized
in Table~\ref{tab:sty}, is from a balanced design such that $m_i=m = 3$
for $i=1,\ldots, q$, $q = 13$, and $M=mq = 39$.

\begin{table}[b]
\vspace*{-6pt}
\centering
\tabcolsep=12.5pt
\caption{Styrene exposure data summary statistics.}
\label{tab:sty}
\begin{tabular}{@{\ } c|ccccccc@{\ }}
Worker & 1 & 2 & 3 & 4 & 5 & 6 & 7 \\
$\bar{y}_i$ & 3.302 & 4.587 & 5.052 & 5.089 & 4.498 & 5.186 & 4.915 \\
\hline
Worker & 8 & 9 & 10 & 11 & 12 & 13 & \\
$\bar{y}_i$ & 4.876 & 5.262 & 5.009 & 5.602 & 4.336 & 4.813 & \\
\hline
& \multicolumn{7}{c}{$\bar{y} = 4.089, \quad \text{SST} = 11.430, \quad \text{SSE} = 14.711 $}
\end{tabular}
\end{table}

We consider prior hyper-parameter values $\alpha_1 = \alpha_2 = 0.1$
and $\beta_1=\beta_2 = 10$.  The drift function at \eqref{eq:sty drift}
requires specification of drift parameters $K=50$, $\Delta_2=6.7585$,
$\delta_1 = 1$, and $\delta_2 = 1$ which results in $\lambda^*=0.5580$.
We then choose $ \lambda=0.97$, $b=37.88927$ and $d=91.96992$,
resulting in $\varepsilon=0.01269784$.

\begin{table}[!b]
\vspace*{-6pt}
\tabcolsep=12pt
\centering
\caption{List of 20 i.i.d.\ $\theta$ draws from the posterior at \eqref{eq:sty posterior} with the accepted $T^*$ values.}
\label{tab:20draws}
\begin{tabular}{@{\ }c|ccc@{\ }}
Draw & $T^*$ & $\sigma^2_{\phi}$ & $\sigma^2_{e}$ \\
  \hline
1 & 145 & 2.477 & 1.2624 \\
  2 & 18 & 1.234 & 1.7698 \\
  3 & 286 & 3.058 & 1.8791 \\
  4 & 40 & 2.607 & 1.2079 \\
  5 & 76 & 2.177 & 2.0603 \\
  6 & 287 & 6.513 & 1.2870 \\
  7 & 39 & 5.961 & 1.5295 \\
  8 & 103 & 1.642 & 1.2093 \\
  9 & 194 & 2.150 & 1.8129 \\
  10 & 195 & 2.112 & 1.4871 \\
  11 & 101 & 1.101 & 1.7166 \\
  12 & 2 & 1.659 & 1.0805 \\
  13 & 5 & 5.544 & 1.2856 \\
  14 & 9 & 1.505 & 1.6600 \\
  15 & 1 & 2.105 & 1.4137 \\
  16 & 150 & 2.681 & 0.7317 \\
  17 & 63 & 3.131 & 1.1506 \\
  18 & 64 & 3.514 & 1.8245 \\
  19 & 52 & 2.119 & 1.1743 \\
  20 & 62 & 3.571 & 1.5009 \\
\end{tabular}
\vspace*{-6pt}
\end{table}

Using these settings $\beta^{*} = 1.000092$ and we choose $\beta =
1.000083$ resulting in $M = 10.19413$.  We again used Bernoulli factory
hyper-parameters of $\kappa = 5/4$ and $\delta = 1/6$ resulting in
$\omega = 0.2$.  In this case, the Bernoulli factory is necessary for
proposals greater than 30,706, approximately 8\% of proposed values.
It is extremely likely the output from the Bernoulli factory will be
zero since a sample of 1000 i.i.d.\ $\tau$ values yielded only a
maximum of 1278.

The exact sampling algorithm was run until we obtained a 20 i.i.d.\
draws from the posterior at \eqref{eq:sty posterior} which took 31,887
proposed $T^*$ values and 2.61e8 $\tau$ values for the Bernoulli
factory.  The accepted $T^*$ values and i.i.d.\ $\theta$ values are
listed in Table~\ref{tab:20draws}.  Notice the maximum accepted $T^*$
was 287, which is well below our observed maximum of 1278 from 1000
i.i.d.\ $\tau$ values.  Hence, drawing from $Q_n$ was easy and almost
all of the simulation time was used for the Bernoulli factory.
Obtaining the 20 i.i.d.\ draws required 24 days of computational time
utilizing six processors in parallel (equating to 144 days on a single
processor).

\section{Discussion} \label{sec:dis}

This paper describes an exact sampling algorithm using a geometrically
ergodic Markov chain on a general state space.  The algorithm is
applicable for any Markov chain where one can establish a drift and
associated minorization with computable constants.  The limitation of
the method is that the simulation time may be prohibitive.

\citet{blan:thom:2007} implement an approximate version using the
Bayesian probit regression example from \citet{vand:meng:2001} with
regeneration settings provided by \citet{roy:hobe:2007}.  This example
is ill-suited using the proposed algorithm because of computational
limitations related to the Bernoulli factory and in obtaining a
practical $\varepsilon$.  Specifically, we found (in simpler examples)
obtaining a single draw from $\pi$ sometimes required millions of
i.i.d.\ $\tau$ variates.  Unfortunately, even using non-constant
$s(x)$, the probit example requires about 14,000 Markov chain draws per
$\tau$ \citep{fleg:jone:2010}.  Hence obtaining a single draw from
$\pi$ would require an obscene number of draws from $X$.
Implementation for more complicated Markov chains, such as this, likely
requires further improvements, or a lot of patience.

Careful analysis of the Markov chain sampler is necessary to find
useful drift and minorization constants.  Most research establishing
drift and minorization is undertaken to prove geometric ergodicity, in
which case the obtained constants are of secondary importance.
However, performance of the exact sampling algorithm is heavily
dependent on these constants.  Improving them may be enough to obtain
exact samples in many settings.

Alternatively, the speed of the overall algorithm would improve if one
could find a bound using non-constant $s(x)$ or a sharper bound with
$\varepsilon$.  The current bound at \eqref{eq:tau bound} could
potentially be modified upon by only considering specific models, or
specific classes of models.

Finally, one could obtain further improvements to the Bernoulli factory
since it requires most of the necessary $\tau$ variates.  Our work has
already obtained a 100 times reduction in computational time.  However
there may be further improvements available for the Bernstein
polynomial coefficients, modifications to Algorithm 4 of
\cite{latu:kosm:papa:robe:2011} or an entirely different method to
estimate $f$.  Hyper-parameter settings also impact performance and
could be investigated further.

\section*{Acknowledgments}

This work was started during the AMS/NSF sponsored Mathematics Research
Communities conference on ``Modern Markov Chains and their Statistical
Applications,'' which was held at the Snowbird Ski and Summer Resort in
Snowbird, Utah between June 28 and July 2, 2009.  We are especially
grateful to Jim Hobert who proposed this research problem.  We would
also like to thank fellow conference participants Adam Guetz, Xia Hua,
Wai Liu, Yufei Liu, and Vivek Roy.  Finally, we are grateful to the
anonymous referees, anonymous associate editor, Galin Jones, Ioannis
Kosmidis and Krzysztof Latuszy{\' n}ski for their constructive comments
and helpful discussions, which resulted in many improvements.

\begin{appendix}


\section{Proof of Theorem~\ref{th:BF}} \label{sec:proof}

\begin{proof}
By construction $f$ is a smooth function
$f:[0,1]\rightarrow[0,1-\varepsilon]$ for some $0 < \varepsilon <
\omega - \delta$.  Proposition 3.1 of \cite{latu:kosm:papa:robe:2011}
and Lemma 6 of \citet{nacu:pere:2005} prove existence of an algorithm
that simulates $f(p)$ if
\begin{itemize}
\item[(i)] $f$ has second derivative $f''$ which is continuous and
\item[(ii)] the coefficients $a$ and $b$ satisfy
\beqn
a(2n,k){2n\choose k}&\ge&\sum_{i=0}^ka(n,i){n\choose i}{n \choose {k-i}}\; ,
\label{a1}\\
b(2n,k){2n\choose k}&\le&\sum_{i=0}^kb(n,i){n\choose i}{n \choose {k-i}}\; .
\label{b1}
\eeqn
\end{itemize}

Condition (i) is clearly satisfied by construction, so it remains to
check condition (ii).  Since the coefficients $a$ and $b$ are defined
through $f$, inequalities \eqref{a1} and \eqref{b1} will be checked
using the properties of $f$.

Recall $b(n,k)=a(n,k)+C/(2n)$, the inequalities \eqref{a1} and
\eqref{b1} above can be re-expressed \citep{nacu:pere:2005} as
\[
a(2n,k)\ge E(a(n,X)) \text{ and } b(2n,k)\le E(b(n,X)),
\]
where $X$ is a hypergeometric random variable, with parameters
$(2n,k,n)$. Using the definition of $a$, and the fact that $f$ is
concave, the first inequality is a direct application of Jensen's
inequality.  The second part is a straight forward application of Lemma
6 from \citet{nacu:pere:2005} and the properties of the hypergeometric
distribution.

Finally, the probability the algorithm needs $N > n$ follows directly
from definitions of the coefficients $a$ and $b$ and Theorem 2.5 of
\cite{latu:kosm:papa:robe:2011}.
\end{proof}

\section{Toy Gibbs drift and minorization} \label{app:Gibbs}

\subsection{Drift condition}

Let $X=\{X_n\}_{n\ge 0}$ be the Markov chain corresponding to the Gibbs
transition kernel given in \eqref{eq:gibbs ker}.  Recall $\sX = \R ^{+}
\times \R$, $x' = (\theta',\mu')$ denotes the current state and $x =
(\theta,\mu)$ denotes the future state.  \citet{jone:hobe:2001}
establish a minorization and \citet{rose:1995a}--type drift condition
for $m \ge 5$, and hence prove the associated Markov chain is
geometrically ergodic.  Using their argument we show a
Roberts-and-Tweedie-type drift condition at \eqref{eq:drift}
\citep{robe:twee:1999, robe:twee:2001} holds using the function $V(x) =
V(\theta, \mu) = 1 + \left( \mu - \bar{y} \right)^2$.  Conditional
independence \citep[based on the update order, see][]{jone:hobe:2001}
yields
\begin{align*}
E\left[  V( X_{i+1} ) | X_i=x' \right] & =  E\left[  V( \theta, \mu ) | \theta' , \mu' \right] \\
& =  E\left[  V( \theta, \mu ) | \mu' \right] \\
& = E \left\{ E \left[  V( \theta, \mu ) | \theta \right] | \mu' \right\} \; .
\end{align*}
Since $\mu | \theta, y \sim \text{N}(\bar{y}, \theta/m)$, the inner expectation is
\begin{equation*}
E \left[  V( \theta, \mu ) | \theta \right] = E \left[  \left( 1 + \left( \mu - \bar{y} \right)^2 \right) | \theta \right] = 1 + \Var \left( \mu | \theta \right) = 1 + \frac{\theta}{m} \; .
\end{equation*}
Then since $\theta | \mu', y \sim \text{IG} ((m-1)/2, m \left[ s^{2} + (\bar{y} - \mu')^{2} \right] /2)$,
\begin{equation*}
E \left[ \theta | \mu' \right] = \frac{m \left[ s^{2} + (\bar{y} - \mu')^{2} \right]}{m-3} \; ,
\end{equation*}
and hence
\begin{equation} \label{eq:expV}
E\left[  V( X_{i+1} ) | X_i=x' \right] = \frac{1 + \left( \mu' - \bar{y} \right)^2}{m-3} + \frac{s^2 + m - 4}{m-3} \; .
\end{equation}
Let $\lambda \in \left( (m-3)^{-1}, 1 \right)$, $b = \left( s^2 + m -4 \right) / (m-3)$, $d \ge b / \left( \lambda - (m-3)^{-1} \right)$ and
\[
C = \left\{ (\theta, \mu) \in \R^{+} \times \R : V(\mu,\theta) \le d  \right\} \: ,
\]
then the drift condition at \eqref{eq:drift} is satisfied, that is
\begin{equation*}
E\left[  V( X_{i+1} ) | X_i=x' \right] \le \lambda V(x') + I_{\left( x' \in C \right)}  b \text{ for all } x ' \in \sX \; .
\end{equation*}
It is easy to see from \eqref{eq:expV} that
\[
A=\sup_{x' \in C} E\left[  V(X_{i+1}) | X_i = x \right] = \frac{ \sup_{x' \in C} V(x') }{m-3} + b = \frac{d}{m-3} + b \; .
\]

\subsection{Minorization condition}

Now we establish the associated minorization condition at \eqref{eq:eps
mc} using a similar argument to \citet{jone:hobe:2001}.  Let
$C_\mu=\{\mu\in\R\: : \: 1+(\mu-\bar y)^2\le d\}$, then for any $\mu'
\in C_\mu$
\begin{equation*}
k( \theta , \mu | \theta' , \mu') = f(\theta | \mu')f(\mu|\theta) \ge f(\mu | \theta) \inf_{\mu \in C_\mu} f(\theta | \mu) \; .
\end{equation*}
Recall $f(\theta | \mu)$ is an IG density, thus $g(\theta) := \inf_{\mu \in C_\mu} f(\theta | \mu)$ can be written in closed form \citep{rose:1996, jone:hobe:2004, tan:hobe:2009},
\begin{eqnarray*}
\inf_{\mu \in C_\mu}f(\theta | \mu) = \left \{
\begin{array}{lr}
IG\left( \frac{m-1}{2}, \frac{m(s^2+d-1)}{2}; \theta\right) & \mbox{ if } \theta<\theta^*\\[6pt]
IG\left( \frac{m-1}{2}, \frac{ms^2}{2}; \theta \right) & \mbox{ if } \theta\ge\theta^*\\
\end{array}
\right.
\end{eqnarray*}
where
$\theta^*=m(d-1)\left[(m-1)\log\left(1+\frac{d-1}{s^2}\right)\right]^{-1}$
and $IG(\alpha,\beta;x)$ is the inverse gamma density evaluated at $x$.
If we further define
\begin{equation*}
\varepsilon = \int_{\R\times\R_{+}}f(\mu | \theta)\inf_{\mu \in C_\mu}f(\theta | \mu) d\mu\: d\theta =\int_{\R_+}\inf_{\mu \in C_\mu}f(\theta | \mu) d\theta
\end{equation*}
and density $q(\theta, \mu) = \varepsilon^{-1} g(\theta) f(\mu | \theta)$, then
\begin{equation*}
k( \theta , \mu | \theta' , \mu') \ge \varepsilon q(\theta, \mu) \; .
\end{equation*}
Letting $Q(\cdot)$ be the probability measure associated with the
density $q$, then the minorization condition from \eqref{eq:eps mc}
holds, that is for any set $A$ and any $(\theta', \mu') \in C$
\begin{equation*}
 P(x,A) \ge \varepsilon \, Q(A) \text{ for all } A \in {\cal B}(\sX) \;.
\end{equation*}
Notice the minorization condition holds for any $d > 0$.

Simulating the split chain requires evaluation of \eqref{eq:delta equals 1},
\begin{align*}
Pr \left( \delta' = 1 | \mu', \theta', \mu, \theta \right) & = \frac{ \varepsilon \; q ( \theta, \mu )}{ k \left( \mu , \theta | \mu' , \theta' \right) } \\
& = \frac{ \varepsilon \; \varepsilon^{-1} \; g(\theta) \; f ( \mu | \theta ) }{ f \left( \theta | \mu' \right) \;  f \left( \mu | \theta \right) } \\
& = \frac{ g(\theta) }{ f \left( \theta | \mu' \right) } \\
& = \frac{ \left[ \frac{ m ( s^2 +  I_{\{ \theta < \theta^{*} \} } (d-1) ) }{2} \right]^{(m-1)/2} }
{ \left[ \frac{ m ( s^2 +  (\bar{y} - \mu')^2 ) }{2} \right]^{(m-1)/2} }\\
& \cdot \frac{ \frac{\theta ^{-(m-1)/2 -1 } }{\Gamma \left( (m-1)/2 \right)}
\exp \left \{ - \left[ \frac{ m ( s^2 + I_{\{ \theta < \theta^{*} \} } (d-1) ) }{2 \theta } \right] \right \} }
{ \frac{\theta ^{-(m-1)/2 -1 } }{\Gamma \left( (m-1)/2 \right)}
\exp \left \{ - \left[ \frac{ m ( s^2 + (\bar{y} - \mu')^2 ) }{2 \theta } \right] \right \} }\\
& = \left[ \frac{ s^2 +  I_{\{ \theta < \theta^{*} \} } (d-1) }{ s^2 + (\bar{y} - \mu')^2 } \right]^{(m-1)/2} \\
& \cdot \exp \left \{ - \frac{ m I_{\{ \theta < \theta^{*} \} } (d-1) }{2 \theta } + \frac{ m  (\bar{y} - \mu')^2 }{2 \theta }  \right \} \; .
\end{align*}
Notice that $Pr \left( \delta' | \mu', \theta' , \mu , \theta \right)$ is free of $\varepsilon$, $\theta'$, and $\mu$.

\section{One-way random effects drift and minorization} \label{ap:sty}

\subsection{Proof of Theorem~\ref{th:drift}}
\begin{proof}
Notice
\begin{equation}
E(V(\xi,\theta) | \xi', \theta') = E\{  E( V(\xi,\theta) | \theta  ) | \xi', \theta'\}
\label{sty:d1}
\end{equation}
where the inner expectation becomes
\[
E(V(\xi,\theta) | \theta) = K+\delta_1 E(w_1(\xi) | \theta) + \delta_2 E(w_2(\xi) | \theta) \; .
\]
Note that
\begin{eqnarray*}
E(w_1(\xi) | \theta) =  E\Big[ \sum_{i=1}^q (\phi_i - \mu)^2 | \theta\Big] = \sum_{i=1}^q \left\{{\rm Var}( \phi_i-\mu |\theta) + \Big( E(\phi_i-\mu | \theta)\Big)^2\right\} \; .
\end{eqnarray*}
For general designs (balanced or unbalanced) \cite{tan:hobe:2009} prove
\begin{eqnarray*}
 \sum_{i=1}^q {\rm Var}(\phi_i-\mu |\theta) \le \Delta_1\sigma_\phi^2 +\Delta_2\sigma_e^2
\end{eqnarray*}
where,
\begin{equation*}
\Delta_1 = \min\left\{ q\left(\sum_{i=1}^q \frac{m_i}{m_i+1}\right)^{-1}\: , \frac{q \cdot \max\{m_1, \ldots, m_q\} }{M} \right\}
\end{equation*}
and
\begin{equation*}
\Delta_2 = \sum_{i=1}^q\frac{1}{m_i} - \sum_{i=1}^q\frac{1}{M(1+m_i)} + \max\left\{  q\left(\sum_{i=1}^q \frac{m_i}{m_i+1}\right)^{-1}  \: , \frac{q}{M}  \right\} \; .
\end{equation*}
Simplifying $\Delta_1$ and $\Delta_2$ under the balanced design, we
obtain $\Delta_1 = 1$ and $\Delta_2 =1-1/[q(m+1)] + \max\{
q(m+1)/m^2\:, 1/m \}$.

From the full conditionals \citep{tan:hobe:2009}
\begin{eqnarray*}
E(\phi_i-\mu | \theta ) = \frac{m\sigma_\phi^2}{\sigma_e^2+m\sigma_\phi^2}(\bar y_k - \bar y) \; ,
 \end{eqnarray*}
then it follows that (with $\Delta_1 = 1$)
\begin{eqnarray*}
E(w_1(\xi) | \theta )&\le& \sigma_\phi^2 +\Delta_2\sigma_e^2  + \left(\frac{m\sigma_\phi^2}{\sigma_e^2+m\sigma_\phi^2}\right)^2\sum_{i=1}^q(\bar y_i - \bar y)^2\\
&\le&\sigma_\phi^2 +\Delta_2\sigma_e^2  + \sum_{i=1}^q(\bar y_i - \bar y)^2 \; .
\end{eqnarray*}
Similarly,
 \begin{eqnarray*}
E(w_2(\xi) | \theta)&\le& (q+1)\sigma_e^2 + m \sum_{i=1}^q(\bar y_i - \bar y)^2
\end{eqnarray*}
To complete the calculation in \eqref{sty:d1}, recall $\sigma_\phi^2 | \xi'$ and $\sigma_e^2 | \xi'$, thus
\[
E(\sigma_\phi^2 | \xi') = \frac{w_1(\xi')+2\beta_1}{q+2\alpha_1-2}
\]
and
\[
E(\sigma_e^2 | \xi') = \frac{w_2(\xi')+SSE+2\beta_2}{M+2\alpha_2-2} \; .
\]
It follows that
\begin{align}
E(V(\xi,\theta) | \xi', \theta') &\le K+ \frac{ 1}{q+2\alpha_1-2}\Big(\delta_1 w_1(\xi')\Big) + \frac{\delta_1\Delta_2/\delta_2 +q+1}{M+2\alpha_2-2}\Big(\delta_2 w_2(\xi')\Big) \notag \\
& + \frac{2\delta_1\beta_1}{q+2\alpha_1-2} + \frac{(\delta_1\Delta_2+\delta_2(q+1))(SSE+2\beta_2)}{M+2\alpha_2-2} + \notag \\
& + (\delta_1+m\delta_2) \sum_{i=1}^q(\bar y_i - \bar y)^2\; . \label{eq:sty EV}
\end{align}
Note there exists $\delta_1 > 0 $ and $\delta_2>0$ such that
$\lambda^*<1$, hence the drift condition at \eqref{eq:drift} is
satisfied.
\end{proof}

For the exact sampling algorithm, we can see from \eqref{eq:sty EV} and the definitions of $C$ and $b$ that
\[
A=\sup_{x' \in C} E\left[  V(X_{i+1}) | X_i = x \right] \le \frac{ (d-K) }{q+2\alpha_1-2}  + \frac{\delta_1\Delta_2/\delta_2 +q+1}{M+2\alpha_2-2} \left( d-K \right)  + b + \lambda^{*} \; .
\]

\subsection{Minorization condition}

Next we show the associated minorization condition holds.  The argument
will be similar to the toy Gibbs example from Appendix~\ref{app:Gibbs}.
Let $C_\xi=\{ \xi \in \R^{q+1}\: : \: 1+\delta_1 w_1(\xi) \le d,
1+\delta_2 w_2(\xi)\le d \}$, then the associated minorization
condition holds if we can find an $\varepsilon$ and $q(\xi,\theta)$
such that for any $\xi' \in C_\xi$
\begin{align}
k(\xi, \theta | \: \xi', \theta') &= f(\sigma_\phi^2 | \xi') f(\sigma_e^2 | \xi') f(\xi | \theta) \notag \\
&\ge f(\xi | \theta) \inf_{\xi \in C_\xi}f(\sigma_\phi^2 | \xi )\inf_{\xi \in C_\xi}f(\sigma_e^2 | \xi ) \notag \\
&= f(\xi | \theta) g_1(\sigma_\phi^2) g_2(\sigma_e^2) \notag \\
&= \varepsilon q(\xi,\theta) \label{eq:sty minor} \; .
\end{align}
The two infimums can be found analytically as before:
\begin{eqnarray*}
g_1(\sigma_\phi^2) = \inf_{\xi \in C_\xi}f(\sigma_\phi^2 | \xi ) = \left\{
\begin{array}{ll}
IG\left( \frac{q}{2}+\alpha_1, \frac{d-K}{2 \delta_1}+\beta_1; \sigma_\phi^2 \right) & \mbox{ if } \sigma_\phi^2 \le \sigma_\phi^*\\[6pt]
 IG\left( \frac{q}{2}+\alpha_1, \beta_1; \sigma_\phi^2 \right) & \mbox{ if } \sigma_\phi^2 > \sigma_\phi^*
 \end{array}
 \right.
 \end{eqnarray*}
and
\begin{eqnarray*}
g_2(\sigma_e^2) = \inf_{\xi \in C_\xi}f(\sigma_e^2 | \xi ) = \left\{
\begin{array}{ll}
IG\left(  \frac{M}{2}+\alpha_2, \frac{(d-K)/\delta_2+SSE}{2}+\beta_2\right) & \mbox{ if } \sigma_e^2\le\sigma_e^*\\[6pt]
 IG\left( \frac{M}{2}+\alpha_2, \frac{SSE}{2}+\beta_2 \right) & \mbox{ if }\sigma_e^2>\sigma_e^*
 \end{array}
 \right. \; .
 \end{eqnarray*}
The points $\sigma_\phi^*$ and $\sigma_e^*$ are the intersection points of the two inverse gamma densities determined by
\[
\sigma^*=\frac{b_1 - b_2}{a (\log b_1-\log b_2)}
\]
where $a$, $b_1$ and $b_2$ are the parameters of the two inverse gamma distributions.

We can define
\begin{align*}
\varepsilon &= \int _{\R^{q+1}\times \R_+^2 } f(\xi | \theta)g(\theta) d\xi d\theta = \int_{\R_+^2}  g(\theta)d\theta \\
& = \int_{\R_+} g_1(\sigma_\phi^2)d\sigma_\phi^2 \int_{\R_+}g_2(\sigma_e^2)d\sigma_e^2
\end{align*}
and density $q(\xi,\theta) = \varepsilon^{-1} g_1(\sigma_\phi^2)
g_2(\sigma_e^2) f(\xi | \theta)$, then \eqref{eq:sty minor} holds.
Since this minorization condition holds for any $d>0$, this establishes
the associated minorization condition.

Simulating the split chain requires evaluation of \eqref{eq:delta
equals 1} similar to the calculation in Appendix~\ref{app:Gibbs},
\begin{align*}
Pr \left( \delta' | \xi', \theta', \xi, \theta \right) & = \frac{ \varepsilon \; q ( \theta, \xi )}{ k \left( \xi , \theta | \xi' , \theta' \right) } \\
& = \frac{ g_1(\sigma_\phi^2) }{ f(\sigma_\phi^2 | \xi') } \frac { g_2(\sigma_e^2) }{ f(\sigma_e^2 | \xi') } \\
& = \left[ \frac{ I_{\{ \sigma_\phi^2 < \sigma_\phi^{*} \} } (d-K)/\delta_1 + 2 \beta_1 }{ w_1(\xi') + 2 \beta_1 } \right]^{\frac{q}{2} + \alpha_1 } \\
& \cdot \exp \left \{ - \frac{ I_{\{ \sigma_\phi^2 < \sigma_\phi^{*} \} } (d-K)/\delta_1 }{2 \sigma_\phi^2 } + \frac{ w_1(\xi') }{2 \sigma_\phi^2 } \right \} \\
& \cdot \left[ \frac{ I_{\{ \sigma_e^2 < \sigma_e^{*} \} } (d-K)/\delta_2 + SSE + 2 \beta_2 }{ w_2(\xi') + SSE + 2 \beta_2 } \right]^{\frac{M}{2} + \alpha_2 } \\
& \cdot \exp \left \{ - \frac{ I_{\{ \sigma_e^2 < \sigma_e^{*} \} } (d-K)/\delta_2 }{2 \sigma_e^2 } + \frac{ w_2(\xi') }{2 \sigma_e^2 } \right \} \; .
\end{align*}

\end{appendix}

\bibliographystyle{apalike}


\end{document}